\newtheorem{Thm}{Theorem}
\newtheorem{Def}[Thm]{Definition}
\newtheorem{lemma}[Thm]{Lemma}
\newtheorem{corollary}[Thm]{Corollary}
\begin{document}

%
\title{Analysis of Frequency Agile Radar via Compressed Sensing}
%
%
%

\author{Tianyao Huang, 
Yimin Liu,~\IEEEmembership{Member,~IEEE,}
Xingyu Xu, 
Yonina C. Eldar,~\IEEEmembership{Fellow,~IEEE,}
Xiqin Wang
\thanks{Partial results \cite{Huang2015} of this work were presented at the IEEE China Summit and International Conference on Signal and Information Processing, Chengdu, China, Sep. 2015. The work of T. Huang, Y. Liu, X. Xu and X. Wang was supported by the National Natural Science Foundation of China under Grant 61801258 and 61571260. (Corresponding author: Yimin Liu.)
\par
T. Huang, Y. Liu, X. Wang and X. Xu are with the Department of Electronic Engineering, Tsinghua University, Beijing 100084, China (e-mail:\{huangtianyao, yiminliu,  wangxq\_ee\}@tsinghua.edu.cn,  {\protect\url{xy-xu15@mails.tsinghua.edu.cn}}).
\par
Yonina C. Eldar is with the Department of Electrical Engineering, Technion-Israel Institute of Technology, Haifa 32000, Israel  (e-mail: {\protect\url{yonina@ee.technion.ac.il}}).
}
}

\maketitle

\begin{abstract}
Frequency agile radar (FAR) is known to have excellent electronic counter-countermeasures (ECCM) performance and the potential to realize spectrum sharing in dense electromagnetic environments. Many compressed sensing (CS) based algorithms have been developed for joint range and Doppler estimation in FAR. This paper considers  theoretical analysis of FAR via CS algorithms. In particular, we analyze the properties of the sensing matrix, which is a highly structured random matrix. We then derive bounds on the number of recoverable targets. Numerical simulations and field experiments validate the theoretical findings and demonstrate the effectiveness of CS approaches to FAR. 
\end{abstract}


%
\IEEEpeerreviewmaketitle

\section{Introduction}
%
%
%
%
Frequency agile radars (FARs) are pulse-based radars, in which the carrier frequencies are varied in a random/pseudo-random manner from pulse to pulse as illustrated in Fig. \ref{fig:far}. Each transmission occupies a narrow band ($B_0$). Pulse returns of different frequencies are processed coherently to synthesize a wider band ($B>B_0$), which generates high range resolution (HRR) profiles.
\begin{figure}[!h]
\centering
\includegraphics[width=2.5in]{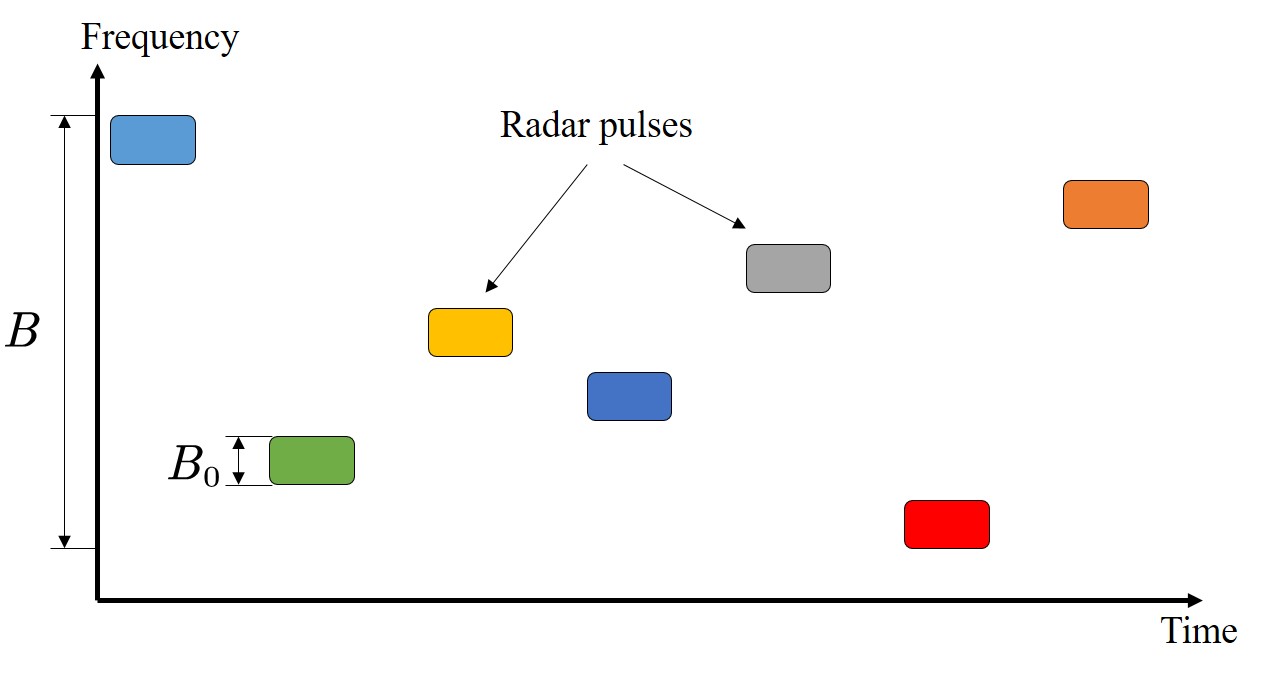}
\caption{An example of a FAR waveform. The boxes indicate the frequency band transmitted in the given time window.}
\label{fig:far}
\end{figure}
\par
Since the works \cite{liu2000sp, Axelsson2007}, frequency agility has received increasing attention \cite{Liu2008, Huang2012, Huang2014, Zhang2011, Yang2013,  Liu2014, Cohen2017a} in the radar community due to its multi-fold merits. 
First, frequency agility introduces good electronic counter-countermeasures (ECCM) performance, because the randomly varied frequencies of the pulses are difficult to track and predict. In addition, the flexibility of the narrow band transmission makes it easier to avoid and reject barrage jamming. 
Second, like stepped frequency radar, FAR can be used for two-dimensional (2D) imaging \cite{Yang2013, Liu2014},  while requiring only a narrow band receiver, which significantly lowers the hardware system cost. 
Third, in contrast to linearly stepped frequency radar, FAR decouples the range-Doppler parameters and produces a thumbtack ambiguity function \cite{Liu2008}. It also mitigates aliasing artifacts in synthetic aperture radar (SAR) \cite{Luminati2004} and extends the unambiguous Doppler window in inverse SAR (ISAR) imaging \cite{Huang2012}. 
Finally, frequency agility can be utilized to exploit vacant spectral bands \cite{Cohen2017a}, and shows potential to increase spectrum efficiency and cope with spectrum sharing issues in a contested, congested and competitive electromagnetic environment.
\par
We consider the problem of joint HRR profile and Doppler estimation of the target. 
When a traditional matched filter is used for  joint range-Doppler estimation, sidelobe pedestal problems occur in FAR \cite{Liu2008}. Therefore, weak targets could be masked by the sidelobe of dominant ones, which restricts the application of FAR in target detection and feature extraction \cite{Liu2008}. 
By exploiting target sparsity, compressed sensing (CS) techniques \cite{Eldar2012,Eldar2015} have been applied in order to alleviate the sidelobe pedestal problem. 
{\it Liu et al.} \cite{Liu2008} propose the RV-IAP algorithm for joint range-Doppler estimation, which is based on the Orthogonal Matching Pursuit (OMP) method \cite{Tropp2007}. Since then, many practical CS algorithms for FAR have been developed \cite{Huang2012a,Huang2014}. 
\par
This paper focuses on the theoretical analysis of CS methods for FAR in terms of reconstruction performance. Theoretical conditions that guarantee perfect recovery in general CS have been extensively studied. Randomness plays a key role in many theoretical results, and often leads to good empirical results \cite{Eldar2012}. Near optimal conditions for Gaussian, sub-Gaussian, Bernoulli and random Fourier matrices have been derived \cite{Baraniuk2008a, Krahmer2014} (and references therein). However, the measurement matrix in FAR {differs from these random matrices, so that previous theoretical results are not directly applicable.}
\par
We start by studying the measurement matrix properties of a FAR system. This matrix is random, due to the randomness of the frequencies. We begin by deriving probability bounds on the spark and coherence of the measurement matrix{, extending the results of \cite{Huang2015}}. Based on these bounds, we develop recovery guarantees for joint HRR profile and Doppler estimation using FAR systems. Theoretical results show that owing to the randomness of the carrier frequencies, with high probability, one can jointly obtain a HRR profile and Doppler of targets while transmitting narrow-band pulses. The number of recoverable targets is proved to be ${N}/{2}$ using $\ell_0$ minimization, or on the order of $\sqrt{\frac{N}{\log (N B/B_0)}}$ using $\ell_1$ minimization, where $N$ is the number of pulses. 
\par 
We next perform simulations and field experiments to demonstrate the reconstruction performance of FAR using CS methods. We build an X-band FAR prototype with a synthetic bandwidth of 1 GHz, and {test} the recovery performance in a real environment. The results show that both the HRR profiles and Doppler of the observed target (a moving car) are reconstructed {with $N= 512$ and $B/B_0 = 32$}. 
\par
The rest of paper is organized as follows. Section \ref{sec:signalmodel} introduces the signal model and problem formulation. In Section \ref{sec:CS}, a brief review of CS algorithms and their performance guarantees is provided. We derive conditions for joint range-Doppler recovery using FAR in Section \ref{sec:FAR}. Numerical simulation and field experiment results are shown in Section \ref{sec:sim}. Section \ref{sec:con} concludes the paper. 
\par
Throughout the paper we use the following notation. The sets $\mathbb{C}$, $\mathbb{R}$, $\mathbb{Z}$, $\mathbb{N}$ refer to complex, real, integer, and natural numbers. 
Notation $| \cdot |$ is used for the modulus, absolute value or cardinality for a complex, real valued number, or a set, respectively, and $j:=\sqrt{-1}$. 
For $x \in \mathbb{R}$, $\lfloor x \rfloor$ {(or $\lceil x \rceil$)} is the largest {(smallest)} integer less {(greater)} than or equal to $x$. 
Uppercase boldface letters denote matrices (e.g., $\bm A$), and lowercase boldface letters denote vectors (e.g., $\bm a$).
The $m$,$n$-th element of matrix $\bm A$ is written as $[{\bm A}]_{m,n}$, and $[{\bm a}]_{n}$ denotes the $n$-th entry of a vector.  Given a matrix $\bm A \in \mathbb{C}^{M \times N}$, a number $n$ (or a set of integers, $\Lambda$), ${\bm A}_n$ (${\bm A}_{\Lambda} \in \mathbb{C}^{M \times |\Lambda|}$) denotes the $n$-th column of $\bm A$ (the sub-matrix consisting of the columns of $\bm A$ indexed by $\Lambda$). 
As for a vector $\bm a \in \mathbb{C}^{N}$,  ${\bm a}_{\Lambda} \in \mathbb{C}^{ |\Lambda|}$ denotes the sub-vector consisting of the elements of $\bm a$ indexed by $\Lambda$. 
The complex conjugate operator, transpose operator, and the complex conjugate-transpose operator are $^*$, $^T$, $^H$, respectively. 
We use $\| \cdot \|_p$, $p = 1, 2$ as the $\ell_p$ norm of an argument, and $\mathbb{P}(\cdot)$ denotes the probability of an event. 
Operations ${{\rm E}}[\cdot]$ and ${{\rm D}}[\cdot]$ represent the expectation and variance of a random argument, respectively. 
The real and imaginary part of a complex valued argument are denoted by ${\rm Re}\left( \cdot \right)$ and ${\rm Im}\left( \cdot \right)$, respectively.

\section{Signal Model}\label{sec:signalmodel}
\subsection{Radar Returns Model}
In this section, we introduce FAR, following the presentation in \cite{Huang2014}. A FAR system transmits monotone pulses, where the $n$-th transmitted pulse is written as
\begin{equation}
T_x(n,t) := {\rm rect}\left(\frac{t-nT_r}{T_p} \right) e^{j2\pi f_n (t-nT_r)},
\end{equation}
$n = 0,1,\dots,N-1$, where $T_r$ and $T_p$ are the pulse repetition interval and pulse duration, respectively, $T_r > T_p$, and rect$(\cdot)$ represents the rectangular envelope of the pulse
\begin{equation}
{\rm rect}(x):=\left\{ \begin{array}{l}
	1,\ 0\leq x\leq 1,\\
	0,\text{ otherwise}.\\
\end{array} \right. 
\end{equation}
The frequency of the $n$-th pulse $f_n$ is randomly varied as $f_n = f_c + d_nB$, where $f_c$ is the initial frequency, $d_n$ is the $n$-th random frequency-modulation code, $0 \leq d_n \leq 1$, and $B$ is the synthetic bandwidth. For a single pulse, the bandwidth {($B_0 = 1/T_p$) is narrow, $B_0<B$,} and the {coarse} range resolution {(CRR)} is $\frac{T_pc}{2}$,  where $c$ is the {speed of} light. Synthesizing echoes of different frequencies refines the range resolution to {$\frac{c}{2B}$}. We denote the number of HRR bins inside a CRR bin as 
\begin{equation}
{M :=\left\lceil \frac{T_pc}{2}\cdot \frac{2B}{c} \right\rceil = \lceil T_p B \rceil \in \mathbb{N}.}
\end{equation}
\par
Received echoes are {assumed} delays of the transmissions. We begin by assuming that there is a single ideal scatterer with scattering coefficient $\beta \in \mathbb{C}$. The echo of the $n$-th pulse can then be written as
\begin{equation}
R_x(n,t):  = \beta T_x \left(n,t-\frac{2r(t)}{c} \right),
\end{equation}
where $r(t)$ {denotes} the range of the scatterer with respect to the radar at time instant $t$. We assume that the scatterer is moving along the line of sight at a constant speed $v$, so that $r(t) = r(0) + vt$. After down conversion, the echo becomes
\begin{equation}\label{equ:sigmod_fasttime}
\begin{split}
R_d&(n,t):=R_x(n,t) \cdot e^{-j2\pi f_n (t-nT_r)} \\
&{=\beta {\rm rect}\left(\frac{t-{2r(t)}/{c}-nT_r}{T_p} \right) e^{j2\pi f_n (t-{2r(t)}/{c}-nT_r)}}\\
&\quad {\cdot e^{-j2\pi f_n (t-nT_r)}}\\
&{=\beta {\rm rect}\left(\frac{t-{2r(t)}/{c}-nT_r}{T_p} \right) e^{- j2\pi f_n \frac{2r(t)}{c} }}.
\end{split}
\end{equation}
\par
Echoes are sampled at the Nyquist rate of {a single pulse,} $f_s = 1/T_p${\color{red},} so that each echo pulse is sampled once. Every sample corresponds to a CRR bin{, and data from all CRR bins are processed in the same way}. 
Returns of {$N$} pulses from the same CRR bin are combined to a vector
\begin{equation}
\left[R_d(0,t), R_d(1,T_r+t),\dots,R_d\left(N-1,(N-1)T_r+t\right) \right],
\end{equation}
and processed to generate HRR profiles and {Doppler estimates}. {During the coherent processing interval (CPI), i.e. $NT_r$, we assume that the scatterer does not cross a CRR bin, which means that}
\begin{equation}
{vN T_r < \frac{T_pc}{2}.}
\end{equation}
\par
Without loss of generality, {suppose that} the $l$-th CRR bin contains the scatterer, $l =0,1, \dots, \left\lfloor{T_r}{f_s}\right\rfloor$. The corresponding sampling instant for the $n$-th pulse is $t = nT_r + l/f_s$.  Substituting $t = nT_r + l/f_s$ into (\ref{equ:sigmod_fasttime}), the sampled echoes are {given} by
\begin{equation}\label{equ:fast_time_samples}
\begin{split}
R_d&(n, nT_r + l/f_s)
=\beta e^{ -j4\pi (f_c + d_nB)  \left(r(0) + v(nT_r + l/f_s)\right)/{c} } \\
&\approx \beta e^{-j4\pi f_c \frac{r(0)+vl/f_s}{c}}e^{-j4\pi d_n B r(0)/c}e^{ -j4\pi f_cvT_r n \zeta_n/c },
\end{split}
\end{equation}
where the approximation holds if the term $e^{-j4\pi d_nBvl/(f_sc)} \approx 1$, which requires $e^{-j4\pi BvT_r/c} \approx 1$. {Here} $\zeta_n := 1+d_nB/f_c$. Generally, different carrier frequencies imply different Doppler shifts, unless the relative bandwidth $B/f_c$ is negligible, i.e. $\zeta_n \approx 1$. However, in a (synthetic) wideband radar, this approximation  does not usually hold, and could give rise to estimation performance deterioration in practice if applied. In the simulations and field experiments in Section \ref{sec:sim}, the signal processing algorithms do not adopt this assumption. However, in the mathematical analysis in Section \ref{sec:FAR}, we assume $\zeta_n \approx 1$ for theoretical convenience. The impact of the relative bandwidth will be discussed in the simulations. 
\par
For brevity, we omit the notation $l$, and write $R_d(n):= R_d(n, nT_r + l/f_s)$. We further introduce notations 
\begin{equation}
\left\{ \begin{array}{l}
	{\tilde \gamma} := \beta e^{-j4\pi f_c  \frac{r(0)+vl/f_s}{c}},\\
	{\tilde p} := -4\pi B r(0)/(Mc),\\
	{\tilde q} := -4\pi f_cvT_r  /c.\\
\end{array} \right. 
\end{equation}
With these definitions (\ref{equ:fast_time_samples}) becomes
\begin{equation}\label{equ:fast_time_samples_pq}
\begin{split}
R_d(n) \approx {\tilde \gamma}e^{j{\tilde p}Md_n+j{\tilde q} n \zeta_n }.
\end{split}
\end{equation}
After the unknowns ${\tilde \gamma}$, ${\tilde p}$ and ${\tilde q}$ are estimated, the absolute intensity $|\beta|$, HRR range $r(0)$ and velocity $v$ are inferred as $|{\tilde \gamma}|$, $-\frac{Mc{\tilde p}}{4\pi B}$ and $-\frac{c{\tilde q}}{4\pi f_c T_r}$, respectively.  
\par
When there are $K$ scatterers occurring inside the CRR cell, radar returns are modeled as a combination of returns from all scatterers,
\begin{equation}\label{equ:gamma_p_q}
R_d(n) = \sum_{k=0}^{K-1} {\tilde \gamma}_k e^{j{\tilde p}_kMd_n + j{\tilde q}_kn\zeta_n},
\end{equation}
where ${\tilde \gamma}$, ${\tilde p}$ and ${\tilde q}$ in (\ref{equ:fast_time_samples_pq}) are replaced with ${\tilde \gamma}_k$, ${\tilde p}_k$ and ${\tilde q}_k$ for the $k$-th scatterer, respectively.
\par
{To} avoid grating lobes in the HRR profiles (which are also called ghost images in the literature) \cite{Liu2009, Liu2014a}, the frequency codes are required to satisfy $\min_{n \neq m} |d_n - d_m| \leq 1/M$, $n,m = 0,1,\dots,N-1$. When codes are discrete, {we denote by $\mathcal{D}_d$ the set of available frequency codes and by $M^{\star}:=\left|\mathcal{D}_d\right|$ the number of codes. The codes are often uniformly spaced, e.g. $d_n \in \mathcal{D}_d :=\left\{ \frac{m}{M^{\star}} | m = 0,1,\dots, M^{\star}-1 \right\}$. It is required that $M^{\star} \geq M$ and} a typical choice is {$M^{\star} = M$}. When codes belong to  a continuous set $\mathcal{D}_c := [0,1)$ (for some of our theoretical results), the requirement is usually easy to satisfy with $N\geq M$. We assume that in both discrete and continuous cases, the codes $d_0, \dots d_{N-1}$ are identically, independently, and uniformly distributed.
\subsection{Signal Model in Matrix Form}\label{subsec:sm_matrix}
We {can} rewrite (\ref{equ:gamma_p_q}) in matrix form as
\begin{equation}{\label{equ:noiselessSigModel}}
{\bm y} ={\bm \Phi}{\bm x},
\end{equation}
where the measurement vector $\bm y \in \mathbb{C}^N$ has entries $[\bm y]_n = R_d(n)$. The vector $\bm x  \in \mathbb{C}^{NM}$ corresponds to the scattering intensities $\tilde \gamma$. The pair $({\tilde p},{\tilde q})$ defines the key target parameters, range and Doppler, and belongs to a continuous 2D domain. The resolutions for ${\tilde p}$ and ${\tilde q}$ are {$\frac{2\pi}{M}$} and {$\frac{2\pi}{N}$}, respectively. Consider the unambiguous continuous region $(p,q) \in [0,2\pi)^2$, and discretize $p$ and $q$ at the Nyquist rates, {$\frac{2\pi}{M}$} and {$\frac{2\pi}{N}$}, respectively. Thus, one obtains $p_m := \frac{2\pi m}{M}$ and $q_n := \frac{2\pi n}{N}$, $m = 0,1,\dots, M-1$, $n = 0,1,\dots, N-1$. Denote the sets containing HRR grids and Doppler grids as $\mathcal{P} := \left\{  \left.\frac{2\pi m}{M}  \right| m = 0,1,\dots, M-1 \right\}$ and $\mathcal{Q} := \left\{  \left.\frac{2\pi n}{N}  \right| n = 0,1,\dots, N-1 \right\}$, respectively, and assume that the targets are located precisely on the grid. 
Define {the matrix} ${\bm X} \in \mathbb{C}^{M \times N}$ with entries
\begin{equation}
[\bm X]_{m,n}=\left\{ \begin{array}{l}
	{\tilde \gamma}_k, \text{ if }\exists k, \text{ } \left({\tilde p}_k, {\tilde q}_k \right) = \left( p_m, q_n\right),\\
	0,\text{ otherwise},\\
\end{array} \right. 
\end{equation}
representing the 2D scattering coefficients in the range-Doppler domain, $m = 0,1,\dots,M-1$ and $n = 0,1,\dots,N-1$. We vectorize $\bm X$ to obtain ${\bm x}:={\rm vec}(\bm X^T)$ with entries $[\bm x]_{n+mN} := [\bm X]_{m,n}$.
\par
{To introduce} the measurement matrix $\bm \Phi \in \mathbb{C}^{N \times MN}$, we define the matrices ${\bm R} \in \mathbb{C}^{N \times M}$ and ${\bm D}  \in \mathbb{C}^{N \times N}$, corresponding to HRR range and Doppler parameters, respectively, with entries
\begin{equation}\label{equ:R_matrix}
[{\bm R}]_{n,m} :=  e^{jp_m Md_n},
\end{equation}
\begin{equation}
[{\bm D}]_{n,l} :=  e^{jq_ln\zeta_n},
\end{equation}
$m = 0,1,\dots,M-1$, and $l,n = 0,1,\dots,N-1$. If $\zeta_n \approx 1$, then ${\bm D}$ is a Fourier matrix. Define ${\bm \Phi} := \left({\bm R}^T \circledcirc {\bm D}^T\right)^T $, where $\circledcirc$ denotes the Khatri-Rao product. Then the elements of ${\bm \Phi}$ are given by
\begin{equation}\label{equ:phi}
\left[{\bm \Phi} \right]_{n,l+mN} := \left[{\bm R} \right]_{n,m}\left[{\bm D} \right]_{n,l}
= e^{jp_mMd_n + jq_l n\zeta_n},
\end{equation}
$m = 0,1,\dots,M-1$ and $l,n = 0,1,\dots,N-1$. When echoes are corrupted by additive noise ${\bm w} \in \mathbb{C}^N$, (\ref{equ:noiselessSigModel})  becomes
\begin{equation}{\label{equ:noisySigModel}}
{\bm y} ={\bm \Phi}{\bm x} +{\bm w}.
\end{equation}
\par
The sensing matrix ${\bm \Phi}$ in (\ref{equ:noisySigModel}) has more columns than rows, $MN \geq N$, which shows that joint range and Doppler estimation in FAR is naturally an under-determined problem. When $\bm x$ is $K$-sparse, which means there are $K$ non-zeroes in $\bm x$, and $K \ll MN$, CS algorithms can be used to {solve} (\ref{equ:noisySigModel}). The targets' parameters can then be recovered from the support set of ${\bm x}$. 
\subsection{Discussion on the Signal Model}
\par
Note that when there is only one scatterer observed, the matched filter that maximizes the signal to noise ratio (SNR) works well in FAR. However, when there are multiple scatterers, sidelobe pedestal problem occurs and weak targets can be masked by the dominant targets' sidelobe. The matched filter estimates the scattering intensities by 
\begin{equation}\label{equ:mf}
\hat{\bm x} := \bm \Phi^H \bm y = \bm \Phi^H {\bm \Phi}{\bm x} + {\bm \Phi^H}{\bm w}.
\end{equation}
In such an under-determined model, $\bm \Phi^H {\bm \Phi} \neq \bm I$, spurious responses emerge in $\hat{\bm x}$ even if there is no noise, i.e. $\bm w = \bm 0$. These spurious responses are the sidelobe pedestal. 
\par
To better interpret the sidelobe pedestal problem, we compare the signal model of FAR with that of an instantaneous wideband radar (IWR). In such a hypothetical radar, we assume that the radar transmits/receives all of its $M$ sub-bands {(with $\mathcal{D}_d$ as the set of frequency codes, $\left| \mathcal{D}_d\right|=M^{\star}=M$)}, and processes the echoes individually for each band. {In FAR, the same set $\mathcal{D}_d$ is also applied with $Md_n \in \mathbb{N}$.} In analogy to (\ref{equ:gamma_p_q}), the return of the $m$-th frequency in the $n$-th pulse can be written as
\begin{equation}\label{equ:IWR}
R_{\rm IWR}(m,n) := \sum_{k=0}^{K-1} {\tilde \gamma}_k e^{j{\tilde p}_k m + j{\tilde q}_kn\eta_{m}},
\end{equation}
where {$\eta_m := 1+\frac{mB}{Mf_c}$}, $m = 0,1,\dots,M-1$, and $n = 0,1,\dots,N-1$. For notational brevity and simplicity, we assume $\eta_{m} \approx 1$ and $\zeta_{n} \approx 1$ for IWR and FAR, respectively. In this case, (\ref{equ:IWR}) can be rewritten in matrix form as
\begin{equation}\label{equ:sm_matrix}
{\bm Z} = \bm F \bm X \bm D^T,
\end{equation}
where ${\bm Z} \in \mathbb{C}^{M \times N}$ {has} entries $[{\bm Z}]_{m,n} = R_{\rm IWR}(m,n)$, and ${\bm F} \in \mathbb{C}^{M \times M}$ is a Fourier matrix with entries $[{\bm F}]_{l,m} :=  e^{jp_m l}$, $m,l = 0,1,\dots,M-1$, $n = 0,1,\dots,N-1$. Equivalently, 
\begin{equation}\label{equ:sm_vector}
{\bm z} = \left( \bm F \otimes \bm D \right) \bm x,
\end{equation}
where {$\bm z := {\rm vec}(\bm Z^T) \in \mathbb{C}^{MN}$} and $ \otimes$ denotes the Kronecker product. The sensing matrix in the IWR {$\bm \Psi := \bm F \otimes \bm D \in \mathbb{C}^{MN \times MN}$} is orthogonal
, i.e. {$\frac{1}{MN}\bm \Psi^H \bm \Psi = \bm I$}, and the sidelobe pedestal problem vanishes.
\par
The measurements in FAR can be regarded as sampling\footnote{{{Since an instantaneous narrowband waveform is used in FAR, it naturally enjoys low data rate in comparison with IWR. However, this paper does not aim at minimizing the data rate. It may have the potential to further reduce the data rate by combining frequency agility with approaches like sub-Nyquist sampling in the fast-time domain \cite{Bar-Ilan2014, Cohen2017a, Xi2014}, omitting some pulses or frequency bands \cite{Hu2011,Cohen2017}.}}} of the IWR measurements, i.e.
\begin{equation}
{[\bm y]_n = [\bm Z]_{Md_n,n}}, \ n = 0,1,\dots, N-1.
\end{equation}
Only one sub-band data is acquired for each pulse. Therefore the sensing matrix of FAR consists of partial rows of that of {the} IWR, i.e.,
\begin{equation}
{\left(\bm \Phi^T \right)_n = \left(\bm \Psi^T \right)_{n+Md_nN}}, \ n = 0,1,\dots, N-1,
\end{equation}
and becomes an under-determined matrix. This interpretation suggests that the sidelobe pedestal of FAR results from the information loss in the frequency domain. The spectral incompleteness leads to an under-determined problem (\ref{equ:noiselessSigModel}). 
In Section \ref{sec:FAR}, we prove that, owing to the randomness of the frequencies, the scatterers can still be correctly reconstructed {via CS methods} with high probability. 
\par  

\section{Review of Compressed Sensing}\label{sec:CS}
In Section \ref{sec:FAR}, we prove that using CS methods, FAR can {provably} recover  the HRR profiles and Doppler. Before deriving the results, we review some basic {notions} of CS \cite{Eldar2015}.
\par
Consider an under-determined linear regression problem, e.g. (\ref{equ:noiselessSigModel}), where $\bm x$ is sparse. The sparsest solution can be obtained via
\begin{equation}\label{equ:P0}
\min_{\bm x} \| \bm x\|_0,{\text{ }}s.t. {\rm \ }  {\bm y} = {\bm A}{\bm x}, \tag{$P_0$}
\end{equation}
where $\| \cdot \|_0$ denotes $\ell_0$ ``norm'' of a vector, i.e. the number of non-zeroes. This solution is the true vector, when the sensing matrix $\bm A$ has the {spark} property. 
\begin{Def}[Spark, \cite{Eldar2015}]
Given a matrix $\bm A$, Spark($\bm A$) is the smallest possible number such that there exists a subgroup of columns from $\bm A$ that are linearly dependent. 
\end{Def}
Unique recovery of $\bm x$ can be ensured if the following condition is satisfied.
\begin{Thm}\label{thm:spark}
The equation ${\bm y} = {\bm A}{\bm x}$ {is} uniquely solved by (\ref{equ:P0}) if and only if $\| \bm x\|_0 < \frac{ {\rm Spark}(\bm A)}{2}$.
\end{Thm}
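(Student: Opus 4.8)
The plan is to prove the two implications separately, using only the definition of ${\rm Spark}(\bm A)$ recast in the equivalent form: any collection of fewer than ${\rm Spark}(\bm A)$ columns of $\bm A$ is linearly independent, hence every nonzero vector $\bm h$ with $\bm A\bm h = \bm 0$ obeys $\|\bm h\|_0 \ge {\rm Spark}(\bm A)$. For the ``if'' direction, suppose $\|\bm x\|_0 < {\rm Spark}(\bm A)/2$ and let $\bm x'$ be any feasible point of (\ref{equ:P0}), i.e.\ $\bm A\bm x' = \bm y = \bm A\bm x$ with $\|\bm x'\|_0 \le \|\bm x\|_0$. Then $\bm h := \bm x - \bm x'$ satisfies $\bm A\bm h = \bm 0$, and since ${\rm supp}(\bm h) \subseteq {\rm supp}(\bm x) \cup {\rm supp}(\bm x')$ we get $\|\bm h\|_0 \le \|\bm x\|_0 + \|\bm x'\|_0 \le 2\|\bm x\|_0 < {\rm Spark}(\bm A)$. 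By the reformulation this forces $\bm h = \bm 0$, so $\bm x' = \bm x$, and $\bm x$ is the unique minimizer of (\ref{equ:P0}).

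For the ``only if'' direction I would show that the threshold is sharp, i.e.\ the stated inequality is necessary: whenever $K \ge {\rm Spark}(\bm A)/2$, $\ell_0$ minimization is not guaranteed to recover a $K$-sparse vector. Write $s := {\rm Spark}(\bm A)$; by definition there is a nonzero $\bm h$ with $\bm A\bm h = \bm 0$ and $\|\bm h\|_0 = s \le 2K$. Split ${\rm supp}(\bm h)$ into $S_1, S_2$ with $|S_1|, |S_2| \le K$, and set $\bm x := \bm h|_{S_1}$, $\bm x' := -\bm h|_{S_2}$ (each equal to $\bm h$ on the indicated set and zero elsewhere). These are distinct, each $K$-sparse, and $\bm A\bm x = \bm A\bm h|_{S_1} = -\bm A\bm h|_{S_2} = \bm A\bm x'$ because $\bm A\bm h = \bm 0$; thus both are feasible for (\ref{equ:P0}) with the same right-hand side $\bm y := \bm A\bm x$, so (\ref{equ:P0}) cannot single out a unique sparsest solution. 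Taking the contrapositive gives the claimed ``only if'', and together with the first part this yields the equivalence, which is the classical spark bound of \cite{Eldar2015}.

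The computations are routine; the two points needing care are the passage from the combinatorial definition of ${\rm Spark}(\bm A)$ to the null-space characterization, and the support-size bookkeeping, which is precisely what makes the strict inequality ``$<$'' (rather than ``$\le$'') the non-improvable form: the ``if'' part uses $\|\bm h\|_0 \le 2\|\bm x\|_0$ for null-space vectors, while the ``only if'' part uses $s \le 2K$. I do not expect any substantive obstacle here, since this is a standard result; the effort lies in a clean write-up.
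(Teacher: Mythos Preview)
The paper does not prove this theorem at all; it is stated in the review section (Section~\ref{sec:CS}) as a classical fact with a pointer to \cite{Eldar2015}, and no argument is given. There is therefore nothing in the paper to compare against. Your proof is the standard textbook argument and is essentially correct.

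Two small points are worth tightening. First, in the ``only if'' direction you write that ``(\ref{equ:P0}) cannot single out a unique sparsest solution,'' but if $|S_1|\neq|S_2|$ it may well have a unique minimizer; what you actually need (and what your construction does give) is that at least one of $\bm x,\bm x'$ is a $K$-sparse vector that is \emph{not} returned by (\ref{equ:P0}), which suffices to falsify uniform recovery at level $K$. Second, note that the ``only if'' direction is false if read as a statement about a \emph{single fixed} $\bm x$: there exist matrices $\bm A$ and vectors $\bm x$ with $\|\bm x\|_0\ge{\rm Spark}(\bm A)/2$ that are nonetheless the unique $\ell_0$ minimizer for their particular $\bm y$. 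You correctly interpret the theorem as a uniform guarantee over all $K$-sparse $\bm x$, which is the intended meaning here and the form in which the result is used downstream (Corollary~\ref{thm:limit}); it would be worth making that reading explicit in your write-up.
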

The above theorem provides a fundamental limit on the maximum sparsity that leads to successful recovery. In general, $\ell_0$ optimization is NP-hard. A widely used alternative is basis pursuit, which solves the problem
\begin{equation}\label{equ:P1}
\min_{\bm x} \| \bm x\|_1, s.t. {\rm \ }  {\bm y} = {\bm A}{\bm x}. \tag{$P_1$}
\end{equation}   
In noisy cases, variants like basis pursuit denoising, LASSO and Dantzig selector can be applied. Many greedy methods have also been suggested {to approximate (\ref{equ:P0})}.
\par
Sufficient conditions that guarantee uniqueness using these methods are extensively studied. Bounds on the mutual incoherence property (MIP) and restricted isometry property (RIP) are widely applied conditions to ensure sparse recovery.
In this paper, we rely on the MIP. A matrix $\bm A$ has MIP if its coherence is small, where coherence is defined as the maximum correlation between two columns, i.e.
\begin{equation}
\mu({\bm A}) := \max_{l \neq k}  \frac{\left|{\bm A}_l^H {\bm A}_k\right|}{\|{\bm A}_l\|_2\|{\bm A}_k\|_2}. 
\end{equation}
\par
\begin{Thm}[\hspace{1sp}\cite{Fuchs2004}]{\label{thm:mip}}
If a matrix ${\bm A} \in \mathbb{C}^{N \times L}$ has coherence $\mu(\bm A)<\frac{1}{2K-1}$, then for any ${\bm x} \in \mathbb{C}^L$ of sparsity $K$, ${\bm x}$ is the unique solution to (\ref{equ:P1}).
\end{Thm}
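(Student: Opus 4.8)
The statement is the classical mutual-incoherence guarantee for basis pursuit, and the plan is to prove it by exhibiting a \emph{dual certificate}. Fix $\bm x \in \mathbb{C}^L$ with $\|\bm x\|_0 \le K$, let $S := \{l : [\bm x]_l \ne 0\}$ so $|S| \le K$, and set $\bm y := \bm A \bm x$. Since coherence is invariant under rescaling of columns and a common positive scaling of $\bm A$ changes neither $\mu(\bm A)$ nor the set of minimizers of (\ref{equ:P1}), and in the FAR setting every column of $\bm \Phi$ has norm $\sqrt{N}$ (read off (\ref{equ:phi})), I would assume from the outset that $\|\bm A_l\|_2 = 1$ for all $l$, so that $|\bm A_l^H \bm A_k| \le \mu := \mu(\bm A)$ whenever $l \ne k$.

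\emph{Step 1 (the support submatrix is well conditioned).} First I would show that $\bm G := \bm A_S^H \bm A_S$ is invertible. Its diagonal entries equal $1$ and its off-diagonal entries are bounded by $\mu$ in modulus, so Gershgorin's disc theorem places every eigenvalue within distance $(|S|-1)\mu \le (K-1)\mu \le (2K-1)\mu < 1$ of $1$. Hence $\lambda_{\min}(\bm G) \ge 1 - (K-1)\mu > 0$, the columns $\bm A_S$ are linearly independent (so ${\rm Spark}(\bm A) > K$, consistent with Theorem~\ref{thm:spark}), and $\|\bm G^{-1}\|_{2 \to 2} \le \frac{1}{1-(K-1)\mu}$.

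\emph{Step 2 (constructing and validating the certificate).} Let $\bm g_S$ collect the phases $[\bm x]_l/|[\bm x]_l|$ for $l \in S$, and put $\bm \lambda := \bm A_S \bm G^{-1} \bm g_S$, so that $\bm c := \bm A^H \bm \lambda$ satisfies $[\bm c]_l = [\bm x]_l/|[\bm x]_l|$ on $S$ by construction. The crux of the whole argument is to verify \emph{strict} feasibility off the support: for $l \notin S$,
\[
|[\bm c]_l| \le \|\bm A_S^H \bm A_l\|_2 \, \|\bm G^{-1}\|_{2\to 2}\, \|\bm g_S\|_2 \le \frac{|S|\,\mu}{1-(|S|-1)\mu} \le \frac{K\mu}{1-(K-1)\mu} < 1,
\]
using $\|\bm A_S^H \bm A_l\|_2 \le \sqrt{|S|}\,\mu$, $\|\bm g_S\|_2 = \sqrt{|S|}$ and Step~1 for the second inequality, monotonicity in $|S|$ (together with $|S|\le K$) for the third, and the hypothesis $(2K-1)\mu < 1$ for the last. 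Extracting the sharp constant $2K-1$ from these three crude norm bounds is the step I expect to demand the most care; the rest is bookkeeping.

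\emph{Step 3 (certificate $\Rightarrow$ unique minimizer).} For any feasible $\bm z$ (i.e. $\bm A \bm z = \bm y$) one has $\|\bm z\|_1 \ge {\rm Re}\,\langle \bm c, \bm z \rangle = {\rm Re}\,(\bm \lambda^H \bm A \bm z) = {\rm Re}\,(\bm \lambda^H \bm y) = {\rm Re}\,\langle \bm c, \bm x \rangle = \|\bm x\|_1$, where the first step uses $\|\bm c\|_\infty \le 1$ and the last uses $[\bm c]_l^\ast [\bm x]_l = |[\bm x]_l|$ on $S$; hence $\bm x$ is a minimizer. If $\|\bm z\|_1 = \|\bm x\|_1$ for some feasible $\bm z$, the inequality chain is tight term by term, so $|[\bm z]_l| = {\rm Re}\,([\bm c]_l^\ast [\bm z]_l) \le |[\bm c]_l|\,|[\bm z]_l|$ for every $l$; at indices $l \notin S$ where $|[\bm c]_l| < 1$ this forces $[\bm z]_l = 0$, so $\bm z$ is supported on $S$, and then $\bm A_S \bm z_S = \bm y = \bm A_S \bm x_S$ with $\bm A_S$ of full column rank (Step~1) yields $\bm z = \bm x$. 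Thus $\bm x$ is the unique solution of (\ref{equ:P1}). An alternative packaging would establish the equivalent null space property $\|\bm h_S\|_1 < \|\bm h_{S^c}\|_1$ for every nonzero $\bm h \in \ker \bm A$ (again from Steps~1--2) and conclude via the triangle inequality, but the dual route above makes the origin of the constant $2K-1$ most transparent.
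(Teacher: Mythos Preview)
The paper does not supply its own proof of Theorem~\ref{thm:mip}; it is quoted as a known result from \cite{Fuchs2004} and then invoked as a black box in Theorem~\ref{thm:MIP}. So there is nothing in the paper to compare your argument against line by line.

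That said, your proposal is a correct and complete proof, and it is essentially the classical dual-certificate argument that goes back to Fuchs. Your Step~1 (Gershgorin on $\bm A_S^H\bm A_S$), Step~2 (the least-squares dual $\bm\lambda=\bm A_S(\bm A_S^H\bm A_S)^{-1}\bm g_S$ and the off-support bound $\tfrac{K\mu}{1-(K-1)\mu}<1$), and Step~3 (optimality plus strict dual feasibility $\Rightarrow$ uniqueness) are all sound. One small remark: the more common packaging bounds $|[\bm c]_l|$ via the $\ell_\infty\!\to\!\ell_\infty$ operator norm of $(\bm A_S^H\bm A_S)^{-1}$ together with $\|\bm A_l^H\bm A_S\|_1\le K\mu$, rather than the $\ell_2$ route you take; both yield exactly $\tfrac{K\mu}{1-(K-1)\mu}$, so your choice loses nothing, and your monotonicity-in-$|S|$ step is correct since $s\mapsto s\mu/(1-(s-1)\mu)$ is increasing while the denominator stays positive. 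The normalization remark at the start (columns of unit norm) is the right way to handle the general statement and is harmless here since, as you note, every column of $\bm\Phi$ already has norm $\sqrt N$.
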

The condition in Theorem \ref{thm:mip} ensures recovery in the presence of noise and also recovery using a variety of {computationally efficient} methods{\cite{Eldar2012}}.
\par
\section{Sensing Matrix Properties of FAR}\label{sec:FAR}
In this section, we analyze the {spark} and MIP properties of {the }FAR's sensing matrix. These results are then used together with Theorems \ref{thm:spark} and \ref{thm:mip} to establish performance guarantees for FAR. In the following derivations, we assume that $\zeta_n \approx 1$.
\subsection{Spark Property}
The following theorem proves that the sensing matrix of FAR almost surely has the {spark} property. 
\begin{Thm}\label{thm:sparkFAR}
Consider $\bm \Phi \in \mathbb{C}^{N \times NM}$ defined in (\ref{equ:phi}) with $d_n$ drawn independently from a uniform continuous distribution over $\mathcal{D}_c = [0,1)$, $n = 0,1,\dots,N-1$. Then, with probability $1$, Spark$(\bm \Phi) = N + 1$.
\end{Thm}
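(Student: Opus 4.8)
The plan is to show that, with probability one, every set of $N$ columns of $\bm\Phi$ is linearly independent; since $\bm\Phi$ has only $N$ rows and no zero column (every entry has unit modulus), this forces $\mathrm{Spark}(\bm\Phi)=N+1$, because any $N+1$ columns lie in $\mathbb C^N$ and are hence dependent (here $NM\ge N+1$, as $M=\lceil T_pB\rceil\ge2$ when $B>B_0=1/T_p$). Fix a set $S=\{(m_1,l_1),\dots,(m_N,l_N)\}$ of $N$ distinct column indices and let $\bm A_S\in\mathbb C^{N\times N}$ be the corresponding submatrix; using $p_mM=2\pi m$, $q_l=2\pi l/N$ and $\zeta_n\approx 1$, its entries are $[\bm A_S]_{n,s}=e^{j2\pi m_s d_n}\,\omega^{\,n l_s}$ with $\omega:=e^{j2\pi/N}$. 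Expanding the determinant gives
\begin{equation}
\det\bm A_S=\sum_{\sigma}\mathrm{sgn}(\sigma)\Big(\prod_{n=0}^{N-1}\omega^{\,n l_{\sigma(n)}}\Big)\exp\!\Big(j2\pi\sum_{n=0}^{N-1} m_{\sigma(n)}d_n\Big),
\end{equation}
the sum running over the $N!$ orderings of the columns in $S$. This is a finite $\mathbb C$-linear combination of characters $\exp(j2\pi\langle c,d\rangle)$ with $c\in\mathbb Z^N$, hence a real-analytic function of $(d_0,\dots,d_{N-1})\in\mathbb R^N$. Since a real-analytic function on $\mathbb R^N$ that does not vanish identically has zero set of Lebesgue measure zero, it suffices to exhibit one point at which $\det\bm A_S\neq0$: then $\mathbb P(\det\bm A_S=0)=0$ because the law of $(d_0,\dots,d_{N-1})$ is absolutely continuous, and a union bound over the $\binom{NM}{N}$ choices of $S$ completes the argument.

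Producing that point is the only substantive step. I would specialize to $d_n=n\delta$, $n=0,\dots,N-1$, with $\delta$ a small irrational number (so the $d_n$ lie in $[0,1)$, though this is not actually needed). Then $[\bm A_S]_{n,s}=\exp\!\big(j2\pi n(m_s\delta+l_s/N)\big)=\xi_s^{\,n}$ with $\xi_s:=\exp\!\big(j2\pi(m_s\delta+l_s/N)\big)$, so $\bm A_S$ is a Vandermonde matrix and $\det\bm A_S=\pm\prod_{1\le s<s'\le N}(\xi_{s'}-\xi_s)$, which is nonzero precisely when the phases $\theta_s:=m_s\delta+l_s/N$ are pairwise distinct modulo $1$. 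For a pair with $m_s=m_{s'}$ the column indices force $l_s\neq l_{s'}$, so $\theta_s-\theta_{s'}=(l_s-l_{s'})/N$ with $0<|l_s-l_{s'}|<N$ is not an integer; for a pair with $m_s\neq m_{s'}$, $\theta_s-\theta_{s'}=(m_s-m_{s'})\delta+(l_s-l_{s'})/N$ is irrational (a nonzero integer multiple of $\delta$ plus a rational) and again not an integer. Hence all $\xi_s$ are distinct, $\det\bm A_S\neq0$ at this point, and the proof is complete.

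The main obstacle is exactly the non-vanishing of the determinant as a function of the frequency codes, and the Vandermonde specialization $d_n=n\delta$ is what makes it transparent; the one subtlety there is the degenerate case in which several chosen columns share a frequency code $m$, where it is the Fourier structure in the Doppler index (distinct $l$'s within a fixed $m$) that rules out a phase collision. The remaining ingredients — analyticity of $\det\bm A_S$, the measure-zero property of the zero set of a nontrivial real-analytic function, absolute continuity of the code distribution, and the finite union bound over column subsets — are standard.
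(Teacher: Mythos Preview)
Your proof is correct and follows essentially the same route as the paper: both arguments show that $\det\bm A_S$ is a non-trivially-vanishing ``nice'' function of the codes whose zero set therefore has measure zero, and both establish non-triviality by the identical Vandermonde specialization $d_n=n\delta$ (equivalently $w_\xi=e^{jb\xi}$) with irrational parameter, together with the same case split on $m_s=m_{s'}$ versus $m_s\neq m_{s'}$. The only cosmetic difference is that the paper views $\det\bm A_S$ as a polynomial in $z_\xi=e^{j2\pi d_\xi}$ and invokes a measure-zero lemma on the torus, whereas you work directly with real-analyticity in $(d_0,\dots,d_{N-1})$; your framing is slightly more streamlined but mathematically equivalent.
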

\begin{proof}
See Appendix \ref{app:spark}.
\end{proof}
Since $\bm \Phi$ has $N$ rows, there must be a linearly dependent submatrix with $N + 1$ columns. Owing to the randomness of the carrier frequencies, Theorem \ref{thm:sparkFAR} shows that a sub-matrix built from any $N$ columns of $\bm \Phi$ is of full rank almost surely. The result is based on the assumption of a continuous distribution on $d_n$. An immediate consequence of Theorems \ref{thm:spark} and \ref{thm:sparkFAR} is the following corollary.
\begin{corollary}\label{thm:limit}
Consider a FAR whose frequency modulation codes are drawn independently from a uniform continuous distribution over $\mathcal{D}_c = [0,1)$, $n = 0,1,\dots,N-1$. Then, with probability 1, $K_{\max}= \frac{N}{2}$ scatterers can be exactly recovered by  (\ref{equ:P0}), where $N$ is the number of pulses.
\end{corollary}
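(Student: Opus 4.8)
The plan is to deduce Corollary~\ref{thm:limit} directly from Theorems~\ref{thm:spark} and~\ref{thm:sparkFAR}; beyond those two results all that is really needed is a little bookkeeping to turn a strict inequality on $\|\bm x\|_0$ into an integer bound and to repackage the almost-sure statement. By Theorem~\ref{thm:sparkFAR}, because the codes $d_0,\dots,d_{N-1}$ are drawn i.i.d.\ from the \emph{continuous} uniform distribution on $\mathcal{D}_c=[0,1)$, the event $\mathcal{E}:=\{\mathrm{Spark}(\bm\Phi)=N+1\}$ has probability $1$. Fix any outcome in $\mathcal{E}$: then $\mathrm{Spark}(\bm\Phi)=N+1$ holds \emph{deterministically}, and Theorem~\ref{thm:spark}, applied with $\bm A=\bm\Phi$, says that for \emph{every} on-grid scene --- encoded as in Section~\ref{subsec:sm_matrix} by a vector $\bm x\in\mathbb{C}^{NM}$ with $\|\bm x\|_0=K$ --- the true $\bm x$ is the unique solution of~(\ref{equ:P0}) with measurements $\bm y=\bm\Phi\bm x$ whenever $K<\mathrm{Spark}(\bm\Phi)/2=(N+1)/2$. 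Since this guarantee holds simultaneously for all such $\bm x$ on the single probability-$1$ event $\mathcal{E}$, no averaging over scenes is needed: with probability $1$, every scene of at most $K$ scatterers with $K<(N+1)/2$ is exactly recovered by~(\ref{equ:P0}).

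It remains to identify the largest admissible $K$. Since $K\in\mathbb{N}$, the condition $K<(N+1)/2$ is equivalent to $K\le\lceil(N+1)/2\rceil-1=\lfloor N/2\rfloor$, which equals $N/2$ exactly when $N$ is even (the regime of the experiments, e.g.\ $N=512$); thus the result should be read with $K_{\max}=\lfloor N/2\rfloor$, coinciding with the stated $N/2$ for even $N$. Maximality is also inherited from Theorem~\ref{thm:spark}, which is an ``if and only if'': for $K=\lfloor N/2\rfloor+1$ one can exhibit two distinct $K$-sparse vectors with identical measurements. Indeed, on $\mathcal{E}$ some set of $N+1$ columns of $\bm\Phi$ is linearly dependent; splitting the coefficients of a nontrivial null combination into two disjoint groups of sizes $\lceil(N+1)/2\rceil=\lfloor N/2\rfloor+1$ and $\lfloor(N+1)/2\rfloor$ yields two distinct vectors, each of sparsity at most $\lfloor N/2\rfloor+1$, producing the same $\bm y$. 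Hence $K_{\max}=\lfloor N/2\rfloor$ is tight.

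The substance of the corollary therefore lies entirely in Theorem~\ref{thm:sparkFAR}, which I invoke as a black box but would prove as follows. Since $\bm\Phi$ has $N$ rows, $\mathrm{Spark}(\bm\Phi)\le N+1$ always holds, so it suffices to show that almost surely every size-$N$ column set $\Lambda$ is linearly independent. By~(\ref{equ:phi}) (with $\zeta_n\approx1$) one has $[\bm\Phi]_{n,l+mN}=e^{j2\pi m d_n}e^{j2\pi ln/N}$, so for a fixed $\Lambda=\{(m_1,l_1),\dots,(m_N,l_N)\}$ the quantity $\det\bm\Phi_\Lambda$ is a multivariate trigonometric polynomial in $(d_0,\dots,d_{N-1})$ that extends to an entire function on $\mathbb{C}^N$; hence it either vanishes identically or is nonzero off a Lebesgue-null set, and a union bound over the finitely many $\Lambda$ (there are $\binom{NM}{N}$) reduces everything to the claim $\det\bm\Phi_\Lambda\not\equiv0$ for each $\Lambda$. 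I expect this last point to be the only real obstacle, and the cleanest route I see is to specialize $d_n=cn$ for a small constant $c\in(0,1/(N-1))$, so that $[\bm\Phi_\Lambda]_{n,i}=e^{j2\pi n(m_ic+l_i/N)}=\rho_i^{\,n}$ with $\rho_i:=e^{j2\pi(m_ic+l_i/N)}$, i.e.\ $\bm\Phi_\Lambda$ is a Vandermonde matrix with determinant $\pm\prod_{i<i'}(\rho_{i'}-\rho_i)$; this is nonzero as soon as the $\rho_i$ are distinct, and since the pairs $(m_i,l_i)$ are distinct this fails only for $c$ in a countable exceptional set, so almost every such $c$ works. Carrying out this specialization together with the (harmless) measure-zero and union-bound passages is the heart of the matter; everything else in Corollary~\ref{thm:limit} is immediate.
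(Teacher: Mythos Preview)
Your proposal is correct and follows exactly the route the paper takes: the corollary is obtained directly by combining Theorem~\ref{thm:spark} with Theorem~\ref{thm:sparkFAR}, and your added bookkeeping on $K_{\max}=\lfloor N/2\rfloor$ (with the tightness argument via a null combination of $N+1$ columns) is a welcome clarification of the stated $N/2$. Your sketch of Theorem~\ref{thm:sparkFAR} also mirrors the paper's Appendix~\ref{app:spark} almost verbatim---the paper likewise reduces to showing $\det\bm\Phi_\Lambda$ is a nonzero polynomial on the torus and verifies nonvanishing by the same Vandermonde specialization $w_\xi=e^{jb\xi}$ (your $d_n=cn$, i.e.\ $b=2\pi c$) with $b$ chosen outside a countable exceptional set so that the nodes $e^{j(2\pi l_\eta/N+bm_\eta)}$ are distinct.
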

\subsection{Mutual Incoherence Property}\label{subsec:MIP}
To obtain performance guarantees using $\ell_1$ minimization or greedy CS methods under noiseless/noisy environments, we derive the MIP for FAR. We start by analyzing the asymptotic statistics of the FAR's sensing matrix. Then invoking Theorem \ref{thm:mip}, we obtain the maximum number of scatterers that FAR {guarantees} to exactly reconstruct with high probability.
\par
Assume in this subsection that $d_n \sim U\left( \mathcal{D}_d\right)$, {$\left| \mathcal{D}_d\right|=M^{\star}=M$}, and recall that the parameters $p$ and $q$ are on a grid, i.e. $p\in \mathcal{P}$ and $q \in \mathcal{Q}$. First, consider the Gram matrix $\bm \Phi^H \bm \Phi$, which links to the coherence. Define  $\bm G \in \mathbb{R}^{NM \times NM}$ as the modulus matrix of the Gram matrix, i.e.
\begin{equation}
[\bm G]_{k,l} = \left|\left[\bm \Phi^H \bm \Phi \right]_{k,l}\right|,k,l=0,1,\dots,NM-1.
\end{equation}
We then have the following results, some of which are partially inspired by \cite{Rossi2014}.
\begin{lemma}{\label{lem:toep}}
The rows of the modulus matrix $\bm G$ are permutations of elements in its first row.
\end{lemma}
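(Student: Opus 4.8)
The plan is to compute $[\bm G]_{k,l}$ explicitly and show it depends on $k$ and $l$ only through a pair of \emph{differences} that live in a cyclic group, so that shifting the row index $k$ merely permutes the entries of that row. Write the column index $k = l_1 + m_1 N$ with $m_1 \in \{0,\dots,M-1\}$, $l_1 \in \{0,\dots,N-1\}$, and similarly $l = l_2 + m_2 N$. Using \eqref{equ:phi} and $\zeta_n \approx 1$,
\begin{equation}
\left[\bm \Phi^H \bm \Phi\right]_{k,l} = \sum_{n=0}^{N-1} e^{-j p_{m_1} M d_n - j q_{l_1} n}\, e^{j p_{m_2} M d_n + j q_{l_2} n} = \sum_{n=0}^{N-1} e^{j (p_{m_2}-p_{m_1}) M d_n}\, e^{j (q_{l_2}-q_{l_1}) n}.
\end{equation}
Since $p_m = 2\pi m/M$ we have $(p_{m_2}-p_{m_1})M = 2\pi(m_2 - m_1)$, and $q_l = 2\pi l/N$ gives $(q_{l_2}-q_{l_1}) = 2\pi(l_2-l_1)/N$; because $Md_n \in \mathbb{Z}$ (recall $d_n \in \mathcal{D}_d$ with $|\mathcal{D}_d| = M^\star = M$), the first exponential depends on $m_2 - m_1$ only modulo $M$, and the second on $l_2 - l_1$ only modulo $N$. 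Hence $[\bm G]_{k,l} = g\big((m_2 - m_1) \bmod M,\ (l_2 - l_1) \bmod N\big)$ for a fixed function $g$ of the random codes, and in particular the entries in row $k$ are exactly $\{\,g(a,b) : a \in \mathbb{Z}_M,\ b \in \mathbb{Z}_N\,\}$, i.e. a permutation of the entries in row $0$ (the first row).

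To turn this into the precise statement, I would fix the bijection $k \leftrightarrow (m_1,l_1)$ between $\{0,\dots,NM-1\}$ and $\mathbb{Z}_M \times \mathbb{Z}_N$ and observe that, for each fixed $k$, the map $l \mapsto \big((m_2 - m_1)\bmod M,\ (l_2-l_1)\bmod N\big)$ is a bijection of $\{0,\dots,NM-1\}$ onto $\mathbb{Z}_M\times\mathbb{Z}_N$ (it is the composition of the index bijection with translation by $(-m_1,-l_1)$ in the product group, which is invertible). Composing with the index bijection in the other direction, $l \mapsto$ (the unique column index whose $(m,l)$-label equals that shifted pair) is a permutation $\sigma_k$ of $\{0,\dots,NM-1\}$, and the computation above shows $[\bm G]_{k,l} = [\bm G]_{0,\sigma_k(l)}$ for all $l$. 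This is exactly the claim that every row of $\bm G$ is a permutation of its first row.

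The only genuinely delicate point is bookkeeping: making sure the two appearances of "the $(m,l)$-labelling of a column index" are the same bijection, and that reducing $m_2-m_1$ mod $M$ (which is what the $2\pi$-periodicity of $e^{j2\pi(\cdot)}$ together with $Md_n\in\mathbb{Z}$ buys us) and $l_2-l_1$ mod $N$ is legitimate — it is, precisely because $d_n$ takes values in the \emph{discrete} set $\mathcal{D}_d$ with $M^\star = M$, which is the hypothesis invoked at the start of this subsection. Note that no probabilistic argument is needed here: the statement is a deterministic structural fact about $\bm G$ valid for \emph{every} realization of the codes $d_0,\dots,d_{N-1}$; the randomness of the $d_n$ only enters later when bounding the actual sizes of the entries $g(a,b)$. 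I would also remark (as the authors presumably do) that the argument in fact shows more — $\bm G$ is block-circulant with circulant blocks, hence doubly block-Toeplitz — but Lemma~\ref{lem:toep} as stated needs only the permutation conclusion, which follows immediately once the difference structure is in place.
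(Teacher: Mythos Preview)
Your proof is correct and follows essentially the same approach as the paper: both compute $\bm\Phi_{l_1}^H\bm\Phi_{l_2}$ explicitly and observe that it depends only on the index differences $m_1-m_2$ and $k_1-k_2$. You finish more carefully than the paper does, invoking the mod-$M$ and mod-$N$ periodicity (available because $Md_n\in\mathbb Z$ under the discrete-code assumption $M^\star=M$) to exhibit an explicit permutation $\sigma_k$ with $[\bm G]_{k,l}=[\bm G]_{0,\sigma_k(l)}$, whereas the paper instead appeals to the Hermitian symmetry $|\bm\Phi_{l_1}^H\bm\Phi_{l_2}|=|\bm\Phi_{l_2}^H\bm\Phi_{l_1}|$ and concludes somewhat loosely that every entry of $\bm G$ already appears in the first row; your argument is the tighter of the two.
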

\begin{proof}
Denote by $\bm \Phi_{l_1}$ and $\bm \Phi_{l_2}$, $l_1,l_2 = 0, 1,\dots,MN-1$, two columns in $\bm \Phi$, corresponding to $(p_{m_1},q_{k_1})$ and $(p_{m_2},q_{k_2})$, respectively, $m_1,m_2 = 0,1,\dots,M-1$, $k_1,k_2 = 0,1,\dots,N-1$. Then
\begin{equation}\label{equ:grammatrix}
\begin{split}
{\bm \Phi}_{l_1}^{H} {\bm \Phi}_{l_2} &= \sum \limits_{n = 0}^{N-1} e^{-jp_{m_1}Md_n - jq_{k_1}n} \cdot e^{jp_{m_2}Md_n + jq_{k_2}n} \\
&= \sum \limits_{n = 0}^{N-1} e^{-j(p_{m_1}-p_{m_2})Md_n - j(q_{k_1}-q_{k_2})n} \\
&= \sum \limits_{n = 0}^{N-1} e^{-j2\pi({m_1 - m_2})d_n - j\frac{2\pi({k_1 - k_2})}{N}n}.
\end{split}
\end{equation}
Clearly (\ref{equ:grammatrix}) depends only on the difference between grid points, i.e. $m_1-m_2 \in \{-M+1,\dots,M-1\}$ and $ k_1-k_2 \in \{ -N+1,\dots,N-1 \}$, and is independent of the particular indices $l_1$ and $l_2$. In addition, $\left|{\bm \Phi}_{l_1}^{H} {\bm \Phi}_{l_2} \right| = \left| {\bm \Phi}_{l_2}^{H} {\bm \Phi}_{l_1}\right|$. Therefore, for any element in $\bm G$, one can find an element in the first row of $\bm G$ with the same value.
\end{proof}
Consider now the $l$-th element in the $0$-th row of $\bm G$, $l \neq 0$, which corresponds to the $l$-th column of $\bm \Phi$. Note that each column of $\bm \Phi$ relies on a specific parameter pair $(p,q)$. For notational simplicity, we drop the subscripts of $(p,q)$ related to $\bm \Phi_l$, and define 
\begin{equation}\label{equ:chi0}
\chi_l := \frac{1}{N}\bm \Phi_0^{ H}\bm \Phi_l =  \frac{1}{N} \sum \limits_{n = 0}^{N-1} e^{jpMd_n + jqn},~l=1,\dots,NM-1.
\end{equation}
We now analyze the mutual coherence, $\mu = \max_{l \neq 0} \left| \chi_l \right|$. Since $d_n$ is random, $\chi_l$ is also random unless $p =0$, in which case  $\chi_l$ reduces to a constant $\frac{1}{N} \sum \limits_{n = 0}^{N-1} e^{ jqn} = 0$ for $q \in \mathcal{Q} \backslash \{0\}$. This constant does do not affect the value of $\mu$ and is thus ignored. Define a set excluding these constants as
\begin{equation}
\begin{split}
\Xi :&=\{1, 2, \dots, NM-1\} \backslash \left\{1, \dots, N-1\right\} \\
&= \{N, N+1, \dots, NM-1\}.
\end{split}
\end{equation}
Then $\chi_l$ is a random variable, $ l \in \Xi$, and has the following statistical characteristics. 
\begin{lemma}
\label{lem:asymGaussian}
As $N \rightarrow \infty$, the real and imaginary parts of $\chi_l$, ${\rm Re}\left( \chi_l \right)$ and ${\rm Im}\left( \chi_l \right)$, $l \in \Xi$, have a joint Gaussian distribution,
\begin{equation}
\label{equ:cs:zeromean}
\left[
\begin{array}{c}
{\rm Re}\left( \chi_l \right) \\
{\rm Im}\left( \chi_l \right) \\
\end{array}
 \right] \sim \mathcal{N}
\left( \left[
\begin{array}{c}
0\\
0\\
\end{array}
\right],\left[
\begin{array}{cc}
\frac{1}{2N} & 0\\
0 & \frac{1}{2N}\\
\end{array}
\right]
\right),
\end{equation}
except in the special case that the corresponding parameters $p = q = \pi${. In this setting}, the joint Gaussian distribution becomes
\begin{equation}
\label{equ:cs:zeromean2}
\left[
\begin{array}{c}
{\rm Re}\left( \chi_l \right) \\
{\rm Im}\left( \chi_l \right) \\
\end{array}
 \right] \sim \mathcal{N}
\left( \left[
\begin{array}{c}
0\\
0\\
\end{array}
\right],\left[
\begin{array}{cc}
{\frac{1}{N}} & 0\\
0 & 0\\
\end{array}
\right]
\right).
\end{equation}
\end{lemma}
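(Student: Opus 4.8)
The plan is to write $\chi_l$ as a normalized sum of independent, uniformly bounded complex random variables and then invoke a triangular-array central limit theorem, computing the limiting covariance along the way. Fix $l\in\Xi$ with associated grid point $(p,q)$, so that $p\in\mathcal{P}\setminus\{0\}$ and $q\in\mathcal{Q}$; write $p=2\pi m/M$ with $m\in\{1,\dots,M-1\}$, and recall that $d_n=s_n/M$ with $s_n$ i.i.d.\ uniform on $\{0,\dots,M-1\}$. Then \eqref{equ:chi0} reads $\chi_l=\frac{1}{N}\sum_{n=0}^{N-1}c_n W_n$, with deterministic unit-modulus weights $c_n:=e^{jqn}$ and i.i.d.\ factors $W_n:=e^{jpMd_n}=e^{2\pi jms_n/M}$. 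Summing geometric series of $M$-th roots of unity, I would first record the moments $\mathrm{E}[W_n]=\frac{1}{M}\sum_{s=0}^{M-1}e^{2\pi jms/M}=0$ (since $m\not\equiv0\bmod M$), $\mathrm{E}[|W_n|^2]=1$, and $\mathrm{E}[W_n^2]=\frac{1}{M}\sum_{s=0}^{M-1}e^{4\pi jms/M}=\rho$, where $\rho=1$ if $M\mid 2m$ (equivalently $p=\pi$) and $\rho=0$ otherwise. In particular $\mathrm{E}[\chi_l]=0$.

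Second, I would compute the exact second-order structure of $S_N:=\sqrt{N}\,\chi_l$. By independence, $\mathrm{E}[|S_N|^2]=\frac{1}{N}\sum_n|c_n|^2=1$ and $\mathrm{E}[S_N^2]=\rho\cdot\frac{1}{N}\sum_{n=0}^{N-1}e^{2jqn}$; since $q=2\pi k/N$ for an integer $k$, the last sum is $1$ when $q\in\{0,\pi\}$ and $0$ otherwise, so $\mathrm{E}[S_N^2]=\tau$ with $\tau\in\{0,1\}$, and $\tau=1$ exactly when $p=\pi$ and $q\in\{0,\pi\}$ (the case singled out in the statement is $p=q=\pi$; the remaining point $(\pi,0)$ produces the same rank-one covariance and is handled identically). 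Using the identities $\mathrm{E}[(\mathrm{Re}\,S_N)^2]=\tfrac12\bigl(\mathrm{E}[|S_N|^2]+\mathrm{Re}\,\mathrm{E}[S_N^2]\bigr)$, its imaginary-part analogue, and $\mathrm{E}[\mathrm{Re}\,S_N\,\mathrm{Im}\,S_N]=\tfrac12\mathrm{Im}\,\mathrm{E}[S_N^2]$, the covariance of $(\mathrm{Re}\,S_N,\mathrm{Im}\,S_N)^T$ equals $\tfrac12\bm{I}$ when $\tau=0$ and $\mathrm{diag}(1,0)$ when $\tau=1$; in the latter case $c_n$ and $W_n$ are both real, so $\mathrm{Im}\,\chi_l\equiv0$. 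Dividing by $\sqrt{N}$ reproduces exactly the covariance matrices in \eqref{equ:cs:zeromean}--\eqref{equ:cs:zeromean2}.

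Third, I would pass to the limit $N\to\infty$. The $\mathbb{R}^2$-valued summands $\frac{1}{\sqrt{N}}\bigl(\mathrm{Re}(c_nW_n),\mathrm{Im}(c_nW_n)\bigr)^T$, $n=0,\dots,N-1$, form a row-wise independent, zero-mean triangular array that is uniformly bounded in norm by $1/\sqrt{N}$, so Lindeberg's condition is immediate; the multivariate Lindeberg--Feller CLT (equivalently, the Cram\'er--Wold device together with the scalar triangular-array CLT) then gives $\sqrt{2N}\,(\mathrm{Re}\,\chi_l,\mathrm{Im}\,\chi_l)^T\xrightarrow{d}\mathcal{N}(\bm{0},\bm{I})$ when $\tau=0$. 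When $\tau=1$, $\mathrm{Im}\,\chi_l\equiv0$ and $\sqrt{N}\,\mathrm{Re}\,\chi_l=\frac{1}{\sqrt{N}}\sum_n c_n(-1)^{s_n}$ is a normalized sum of i.i.d.\ Rademacher variables (the deterministic signs $c_n\in\{\pm1\}$ being absorbed), so the classical CLT yields $\mathcal{N}(0,1)$, matching \eqref{equ:cs:zeromean2}.

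I expect the main difficulty to be careful bookkeeping rather than any deep obstacle. The quantity $\mathrm{E}[S_N^2]$ must be evaluated exactly, not merely asymptotically, to pin down which grid points $(p,q)\in\mathcal{P}\times\mathcal{Q}$ yield the degenerate, rank-one covariance; and the informal phrase ``as $N\to\infty$'' has to be made precise as a triangular-array statement, since both the weights $c_n=e^{jqn}$ and the ambient dimension $NM$ change with $N$. It is also worth keeping in mind that the clean exponential-sum form \eqref{equ:chi0} already rests on the approximation $\zeta_n\approx1$ adopted at the start of Section~\ref{sec:FAR}.
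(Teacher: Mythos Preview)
Your proof is correct and follows the same high-level strategy as the paper---verify a central limit theorem for a sum of independent, non-identically distributed bounded summands, then pin down the limiting covariance from $\mathrm{E}[\chi^2]$ and $\mathrm{E}[|\chi|^2]$---but your execution is cleaner in several respects. The paper works directly with $X_n=T_\lambda\cos(pMd_n+\theta)$ (a Cram\'er--Wold combination) and checks Lyapunov's condition with $\delta=1$, computing $S_N^2$ and the second moments through lengthy trigonometric sums. You instead factor each summand as $c_nW_n$ with $c_n=e^{jqn}$ deterministic and $W_n=e^{jpMd_n}$ i.i.d., so that $\mathrm{E}[W_n]=0$ and $\mathrm{E}[W_n^2]=\rho\in\{0,1\}$ drop out immediately from orthogonality of $M$-th roots of unity, and the uniform bound $|c_nW_n|\le1$ makes the Lindeberg condition trivial without any third-moment computation. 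Your explicit triangular-array formulation is also more honest about what ``$N\to\infty$'' means, since $q$ and the grid depend on $N$. Finally, you correctly observe that the rank-one covariance also arises at $(p,q)=(\pi,0)$ whenever $M$ is even; the paper's statement and its limit computation of $\frac{1-e^{j2qN}}{1-e^{j2q}}$ overlook this point, though it is harmless for the downstream tail bound in Corollary~\ref{cor:Rayleigh}.
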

\begin{proof}
See Appendix \ref{app:asymGaussian}.
\end{proof}
The special case $p = q = \pi$ corresponds to a specific grid point of the $(p,q)$ plane{. Considering the generic case and this special case separately leads to the following conclusions.}
\begin{corollary}\label{cor:Rayleigh}
When $N \rightarrow \infty$ and $l \in \Xi$, {with $N\epsilon^2>2/\pi$,} 
\begin{equation}\label{equ:rayleigh}
\mathbb{P}\left( \left| \chi_l \right| > \epsilon \right) \leq e^{-N\epsilon^2{/2}}.
\end{equation}
\end{corollary}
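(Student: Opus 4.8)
The plan is to pass from $\chi_l$ to its asymptotic Gaussian law via Lemma \ref{lem:asymGaussian} and then apply elementary Gaussian tail estimates, handling the two regimes of that lemma separately. For every $l \in \Xi$ the associated HRR grid point satisfies $p \neq 0$, so Lemma \ref{lem:asymGaussian} applies: as $N \to \infty$, either ${\rm Re}(\chi_l)$ and ${\rm Im}(\chi_l)$ are independent $\mathcal{N}(0, 1/(2N))$ (the generic case), or $\chi_l$ is real with $\chi_l \sim \mathcal{N}(0, 1/N)$ (the special case $p = q = \pi$). It therefore suffices to bound the tail $\mathbb{P}(|\chi_l| > \epsilon)$ in each of these two cases and retain the larger bound.

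In the generic case $|\chi_l|$ is Rayleigh distributed with scale parameter $\sigma^2 = 1/(2N)$, equivalently $|\chi_l|^2$ is exponential with mean $1/N$, so
\begin{equation}
\mathbb{P}\left(|\chi_l| > \epsilon\right) = \exp\!\left(-\frac{\epsilon^2}{2\sigma^2}\right) = e^{-N\epsilon^2} \leq e^{-N\epsilon^2/2}.
\end{equation}
In the special case $\chi_l$ is real; writing $Q(x) := \frac{1}{\sqrt{2\pi}}\int_x^{\infty} e^{-t^2/2}\,dt$ for the standard Gaussian tail and using $\sqrt{N}\,\chi_l \sim \mathcal{N}(0,1)$ together with $Q(x) \leq \frac{1}{2}e^{-x^2/2}$ (valid for $x \geq 0$),
\begin{equation}
\mathbb{P}\left(|\chi_l| > \epsilon\right) = 2\,Q\!\left(\sqrt{N}\,\epsilon\right) \leq e^{-N\epsilon^2/2}.
\end{equation}
As these exhaust the possibilities for $l \in \Xi$, we get $\mathbb{P}(|\chi_l| > \epsilon) \leq e^{-N\epsilon^2/2}$ in every case, as claimed.

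The delicate point is that Lemma \ref{lem:asymGaussian} only provides convergence in distribution, so the substitution of the Gaussian law into the tail probability has to be justified in the limit $N \to \infty$ (or, for a genuinely finite-$N$ statement, controlled through a Berry--Esseen-type estimate for the sum $\chi_l = \frac1N\sum_{n=0}^{N-1} e^{jpMd_n+jqn}$ of bounded independent terms). This is where the hypothesis $N\epsilon^2 > 2/\pi$ enters: it is equivalent to $\epsilon > \sqrt{2/(\pi N)} = {\rm E}\,|\chi_l|$ in the worst (special) case, so the threshold $\epsilon$ sits strictly above the typical magnitude of $|\chi_l|$ and the exponential tail estimate dominates the approximation error. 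I expect this transfer from asymptotic Gaussianity to the stated bound to be the main obstacle; once the Gaussian limit is in hand, the two tail computations themselves are routine.
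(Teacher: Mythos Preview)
Your argument is correct and follows the same two-case split as the paper. The generic case is handled identically (Rayleigh tail $e^{-N\epsilon^2}\le e^{-N\epsilon^2/2}$). In the special case $p=q=\pi$ the paper instead uses the Mill's-ratio bound $Q(x)\le\frac{1}{x\sqrt{2\pi}}e^{-x^2/2}$, obtaining $\mathbb{P}(|\chi_l|>\epsilon)\le\sqrt{\tfrac{2}{\pi N\epsilon^2}}\,e^{-N\epsilon^2/2}$; the hypothesis $N\epsilon^2>2/\pi$ is then precisely what forces the prefactor to be at most $1$. Your choice $Q(x)\le\tfrac{1}{2}e^{-x^2/2}$ is sharper and delivers the conclusion with no side condition on $N\epsilon^2$, so your conjecture that the hypothesis is there to control a Berry--Esseen-type approximation error is off target: in the paper it is purely an artifact of the weaker Gaussian tail estimate used for the degenerate case. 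Your caveat about passing from convergence in distribution to an actual tail inequality is legitimate, but the paper does not address it either; both arguments treat the limiting Gaussian as exact, in keeping with the ``$N\to\infty$'' phrasing of the corollary.
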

\begin{proof}
Lemma \ref{lem:asymGaussian} proves that when $p$ and $q$ do not equal $\pi$ simultaneously, the real and imaginary parts of $\chi_l$ asymptotically obey $\mathcal{N}(0,\frac{1}{2N})$ independently. Therefore, the magnitude $\left|\chi_l\right|$ obeys a Rayleigh distribution with probability density function $f(x) = 2Nxe^{-Nx^2}$, $x \geq 0$, and cumulative distribution function $F(x)=1-e^{-Nx^2}$, $x \geq 0$. Thus, $\mathbb{P}\left( \left| \chi_l \right| > \epsilon \right)=1-F(\epsilon)$, which yields 
\begin{equation}
\mathbb{P}\left( \left| \chi_l \right| > \epsilon \right) = e^{-N\epsilon^2}{ \le e^{-N\epsilon^2/2}}.
\end{equation}
In the special case that $p=q=\pi$,  the real part of $\chi_l$ asymptotically obeys $\mathcal{N}(0,{\frac{1}{N}})$ independently and the imaginary part vanishes. {Then elementary estimates of the Gaussian error function yield
\begin{equation}
\mathbb{P}(|\chi_l| > \epsilon) \le \sqrt{\frac{2}{\pi N\epsilon^2}}e^{-N\epsilon^2/2},
\end{equation}
which is less than $e^{-N\epsilon^2/2}$ if $\frac{2}{\pi N\epsilon^2}\le1$, i.e. $N\epsilon^2>\frac{2}{\pi}$}. 
\end{proof}
\begin{lemma}\label{cor:unionbound}
The maximum $\mu = \max_l | \chi_l|$, $l \in \Xi$, satisfies the following as $N \rightarrow \infty$,
\begin{equation}\label{equ:pr_union}
\mathbb{P}\left( \mu > \epsilon \right) \leq  (MN-N) e^{-N\epsilon^2{/2}}.
\end{equation}
\end{lemma}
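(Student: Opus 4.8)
The plan is to derive the claim directly from Corollary~\ref{cor:Rayleigh} by a union bound over the index set $\Xi$. First I would recall that, by definition, $\mu = \max_{l \in \Xi} |\chi_l|$, so the event $\{\mu > \epsilon\}$ coincides exactly with the union $\bigcup_{l \in \Xi} \{|\chi_l| > \epsilon\}$. Subadditivity of probability then gives
\[
\mathbb{P}\!\left( \mu > \epsilon \right) \le \sum_{l \in \Xi} \mathbb{P}\!\left( |\chi_l| > \epsilon \right).
\]
No independence among the $\chi_l$ is needed for this step, which is convenient since the $\chi_l$ all depend on the common random codes $d_0,\dots,d_{N-1}$ and are therefore strongly correlated. (Restricting the maximum to $\Xi$ rather than to all $l \neq 0$ is harmless, since the excluded entries, those with $p = 0$, are deterministically zero and cannot exceed $\epsilon$.)

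Next I would invoke Corollary~\ref{cor:Rayleigh}: for every $l \in \Xi$, regardless of whether the associated grid point satisfies $p = q = \pi$, and provided $N\epsilon^2 > 2/\pi$, one has $\mathbb{P}(|\chi_l| > \epsilon) \le e^{-N\epsilon^2/2}$ in the regime $N \to \infty$. Since $\Xi = \{N, N+1, \dots, NM-1\}$ has cardinality $|\Xi| = MN - N$, summing this uniform per-index bound over the $MN - N$ terms yields
\[
\mathbb{P}\!\left( \mu > \epsilon \right) \le (MN - N)\, e^{-N\epsilon^2/2},
\]
which is precisely the assertion of the lemma.

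I do not expect a genuine obstacle here; the lemma is essentially a repackaging of the single-column tail estimate into a uniform statement over all off-diagonal Gram entries. The only points needing care are bookkeeping ones: Corollary~\ref{cor:Rayleigh} is asymptotic in $N$, so the bound above should likewise be read as holding as $N \to \infty$ (equivalently, up to an $o(1)$ correction coming from the Gaussian approximation in Lemma~\ref{lem:asymGaussian}); one must also carry along the mild hypothesis $N\epsilon^2 > 2/\pi$ so that the special-case estimate at $p = q = \pi$ remains dominated by $e^{-N\epsilon^2/2}$; and one should confirm the cardinality $|\Xi| = MN - N$, which is immediate from the explicit description of $\Xi$.
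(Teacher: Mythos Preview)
Your proposal is correct and matches the paper's own proof essentially line for line: the paper applies the union bound $\mathbb{P}(\mu>\epsilon)\le\sum_{l\in\Xi}\mathbb{P}(|\chi_l|>\epsilon)$, invokes Corollary~\ref{cor:Rayleigh} for each term, and counts $|\Xi|=MN-N$. The only cosmetic difference is that the paper dispatches the hypothesis $N\epsilon^2>2/\pi$ in one sentence by noting that for fixed $\epsilon>0$ it holds automatically as $N\to\infty$, whereas you carry it along as a standing assumption.
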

\begin{proof}
{For fixed $\epsilon>0$, we have $N\epsilon^2>\frac{2}{\pi}$ as $N\to\infty$.} According to the union bound
\begin{equation}\label{equ:ccdf}
\begin{split}
\mathbb{P} (\mu > \epsilon) &\leq \sum \limits_{l \in \Xi} \mathbb{P}(|\chi_l|>\epsilon)\\
&\leq (MN-N) e^{-N\epsilon^2{/2}},
\end{split}
\end{equation}
since there are $NM - N$ indices in $\Xi$.
\end{proof}
\par
We next derive a condition for FAR to meet the requirement of Theorem \ref{thm:mip} $\mu ({\bf \Phi}) < \frac{1}{2K-1}$ with high probability.
\begin{Thm}\label{thm:MIP}
The coherence of $\bf \Phi$, defined in (\ref{equ:phi}), obeys $\mu ({\bf \Phi}) < \frac{1}{2K-1}$ with a probability higher than $1-\delta$, when
\begin{equation}\label{equ:K_rangeMIP}
K \leq \frac{1}{2{\sqrt 2}}\sqrt{\frac{N}{\log (MN-N) - \log \delta}}+ \frac{1}{2}.
\end{equation}

\end{Thm}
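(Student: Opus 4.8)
The plan is to apply the union bound of Lemma~\ref{cor:unionbound} with the threshold $\epsilon$ set equal to the target coherence level $\frac{1}{2K-1}$, and then solve the resulting inequality for $K$. Substituting $\epsilon = \frac{1}{2K-1}$ into (\ref{equ:pr_union}) gives
\begin{equation}
\mathbb{P}\left( \mu(\bm \Phi) \geq \frac{1}{2K-1} \right) \leq (MN-N)\, e^{-\frac{N}{2(2K-1)^2}},
\end{equation}
so it suffices to force the right-hand side to be at most $\delta$. Taking logarithms, $(MN-N)\, e^{-N/(2(2K-1)^2)} \leq \delta$ is equivalent to $\frac{N}{2(2K-1)^2} \geq \log(MN-N) - \log\delta$, which rearranges to $(2K-1)^2 \leq \frac{N}{2\left(\log(MN-N) - \log\delta\right)}$, i.e. to the claimed bound $K \leq \frac{1}{2\sqrt 2}\sqrt{\frac{N}{\log(MN-N) - \log\delta}} + \frac{1}{2}$. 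Under this condition $\mathbb{P}\left(\mu(\bm \Phi) < \frac{1}{2K-1}\right) \geq 1 - \delta$, which is precisely the assertion; combined with Theorem~\ref{thm:mip} it yields recovery with probability at least $1-\delta$.

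The points that need a little care are the hypotheses under which Lemma~\ref{cor:unionbound} (and hence Corollary~\ref{cor:Rayleigh}) holds. First, those statements are asymptotic in $N$, so the theorem should be read in the same regime, and I would say so explicitly. Second, Corollary~\ref{cor:Rayleigh} requires $N\epsilon^2 > 2/\pi$: with the chosen $\epsilon = \frac{1}{2K-1}$ one has, at the boundary value of $K$, $N\epsilon^2 = \frac{N}{(2K-1)^2} = 2\left(\log(MN-N) - \log\delta\right)$ (and larger for smaller $K$), which exceeds $2/\pi$ as soon as $MN-N$ is moderately large, so the side condition is automatically satisfied in the regime of interest. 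I would also note that since the limiting law of each $\chi_l$ is continuous (Lemma~\ref{lem:asymGaussian}), the events $\{\mu > \epsilon\}$ and $\{\mu \geq \epsilon\}$ coincide in the limit, so replacing the strict inequality by a non-strict one costs nothing.

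I do not expect a genuine obstacle: the theorem is essentially an algebraic rearrangement of Lemma~\ref{cor:unionbound}. The only real bookkeeping is ensuring the logarithm is the natural logarithm consistent with the exponential tail bound, and that the side condition $N\epsilon^2 > 2/\pi$ inherited from Corollary~\ref{cor:Rayleigh} is not quietly violated by the particular choice $\epsilon = 1/(2K-1)$ --- which, as noted, it is not.
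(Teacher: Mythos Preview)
Your proposal is correct and matches the paper's own proof essentially step for step: set $\epsilon = \frac{1}{2K-1}$, invoke Lemma~\ref{cor:unionbound}, and rearrange $(MN-N)e^{-N\epsilon^2/2}\le\delta$ into the bound on $K$, checking along the way that $N\epsilon^2 \ge 2(\log(MN-N)-\log\delta) \ge 2\log 2 > 2/\pi$ so Corollary~\ref{cor:Rayleigh} applies. The paper is slightly more explicit in quantifying the side condition (assuming $0<\delta<1$ and $MN-N\ge 2$), but otherwise the arguments coincide.
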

\begin{proof}
Let {$\epsilon=\frac{1}{2K-1}$. From (\ref{equ:K_rangeMIP}), we have that
\begin{equation}\label{equ:tmp_Ne_bound}
  N\epsilon^2\ge 2 \left(\log (MN-N) - \log \delta \right).
\end{equation}
Assume that $0<\delta<1$ and $MN-N=N(M-1)\ge 2$. Then
\begin{equation}
  N\epsilon^2\ge 2\log 2>\frac{2}{\pi}.
\end{equation}
Using (\ref{equ:pr_union}) and (\ref{equ:tmp_Ne_bound}), we finally obtain
\begin{equation}
\mathbb{P}\left( \mu \leq \epsilon \right) > 1-(MN-N)e^{-N\epsilon^2/2}\ge 1-\delta,
\end{equation}
completing the proof.
}
\end{proof}
Theorem \ref{thm:MIP} {shows} that the sensing matrix of FAR has the MIP; thus, according to Theorem \ref{thm:mip}, HRR range-Doppler reconstruction is guaranteed if the number of targets satisfies $K = O\left( \sqrt{\frac{N}{\log MN}}\right)$, where $N$ in the numerator represents the number of measurements, and $MN$ in the denominator links to the number of grid points. The work \cite{Rossi2014} for direction of arrival estimation using random array MIMO radar also proposes a bound based on MIP, which {guarantees} recovery of a number of targets on the order of $K = O\left( \frac{\sqrt{L}}{\log G}\right)$, where $L$ and $G$ are the number of measurements and grid points, respectively. In \cite{Rossi2014}, a bound for non-uniform recovery based on RIPless theory is also provided, and $K$ is relaxed to $K = O\left( \frac{L}{\log^2 G}\right)$. However, RIPless is not directly applicable to FAR, because for each row of $\bm \Phi$, i.e. $\bm a = \left( \bm \Phi^T \right)_n \in \mathbb{C}^{MN}$, ${\rm E}\left[ \bm a \bm a^H \right]$ is rank deficient and the isotropy property ${\rm E}\left[ \bm a \bm a^H \right] = \bm I$ does not hold. Dorsch and Rauhut \cite{Dorsch2017} analyze the joint angle-delay-Doppler recovery performance using MIMO based on RIP. They assume a periodic random probing signal with $N_t$ independent samples. {In this case} the recoverable number of scatterers is on the order of $K = O\left( \frac{N_t}{\log^2 G}\right)$. {Though the} RIP leads to a tighter bound than MIP, the RIP {of the} FAR {system matrix} is still an open question.
\par
Sufficient conditions that guarantee uniform recovery are usually pessimistic. It is well known that CS algorithms often outperform the theoretical uniform recovery guarantees. In the next section, we evaluate the practical performance of FAR using CS methods.

\section{Simulation and Experimental Results}\label{sec:sim}
In this section, simulations and field experiments are executed to demonstrate the properties of the sensing matrix of FAR and the effectiveness of CS algorithms to reconstruct the targets' HRR range and Doppler.
\subsection{Spark Property}
First, the {spark} property of the sensing matrix ${\bm \Phi} \in \mathbb{C}^{N \times NM}$ is discussed. We construct a sub-matrix ${\bm \Phi}_{\Omega} \in \mathbb{C}^{N \times N}$ of $\bm \Phi$, where the set $\Omega \subset \{0,1,\dots,NM-1\}$ and $|\Omega| = N$, and calculate the minimum singular value of ${\bm \Phi}_{\Omega}$. We check whether it is equal to zero (rank deficient). Concretely, we set $N =6$ and $M =3$, which are small to make it possible to enumerate all the $\binom{NM}{N}$ sub-matrices of ${\bm \Phi}$. We record the minimum singular value $\sigma_N$ (normalized by $\sqrt{N}$) of each sub-matrix; thus, we obtain $\binom{NM}{N}$ results, among which the minimum is denoted as $\sigma_{\Omega}$.
The frequency codes $d_n$ are distributed uniformly on a continuous set. We further assume the relative bandwidth satisfies $B/f_c \approx 0$. We perform 2000 Monte-Carlo trials. The histograms of $\sigma_N$ and $\sigma_{\Omega}$ are depicted in Fig. \ref{fig:hist_sigma_N} and Fig. \ref{fig:hist_sigma_omega}, respectively. The minimum of $\sigma_N$ and $\sigma_{\Omega}$ is $1.28 \times 10^{-6}>0$. The results indicate that a continuous distribution of codes results in good properties of the sensing matrix. For comparison, we also perform simulations with codes distributed on the discrete set $\mathcal{D}_d$, and count the number of minimum singular values $\sigma_N$, i.e. $\sigma_{\Omega}$, less than $\epsilon_{\rm SVD} = 1 \times 10^{-15}$, which leads to Pr$(\sigma_{\Omega}<\epsilon_{\rm SVD}) \approx 0.358$. Therefore, a continuous distribution leads to better spark performance than a discrete distribution.
\begin{figure}[!ht]
\centering
\includegraphics[width=2.5in]{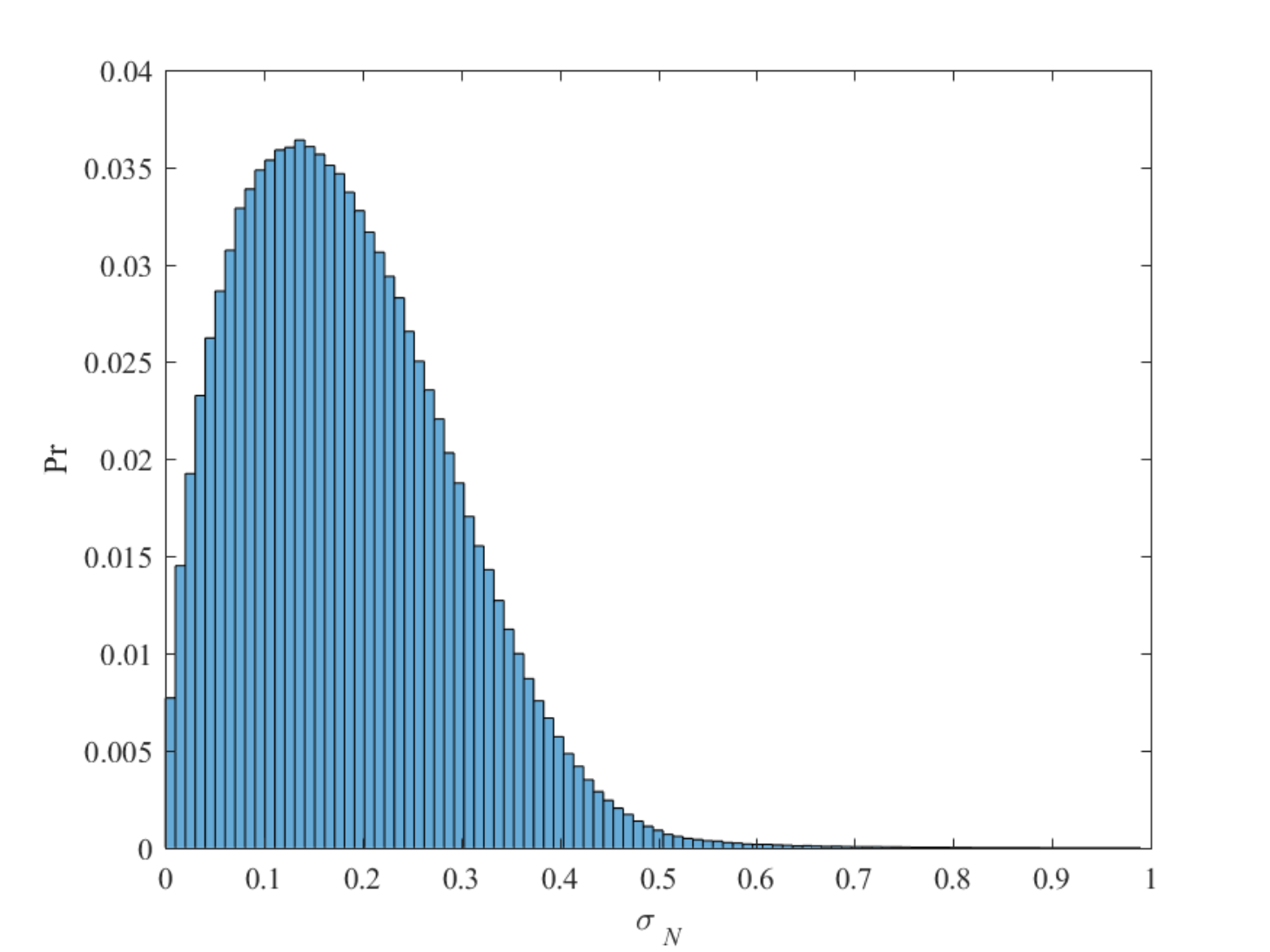}
\caption{Histogram of $\sigma_{N}$ (normalized by $\sqrt{N}$), the minimum singular value of each $N \times N$ sub-matrix. The histogram is obtained using 2000 Monte-Carlo trials and $\binom{NM}{N}$ sub-matrices in each Monte-Carlo trial.}
\label{fig:hist_sigma_N}
\end{figure}
\par
\begin{figure}[!ht]
\centering
\includegraphics[width=2.5in]{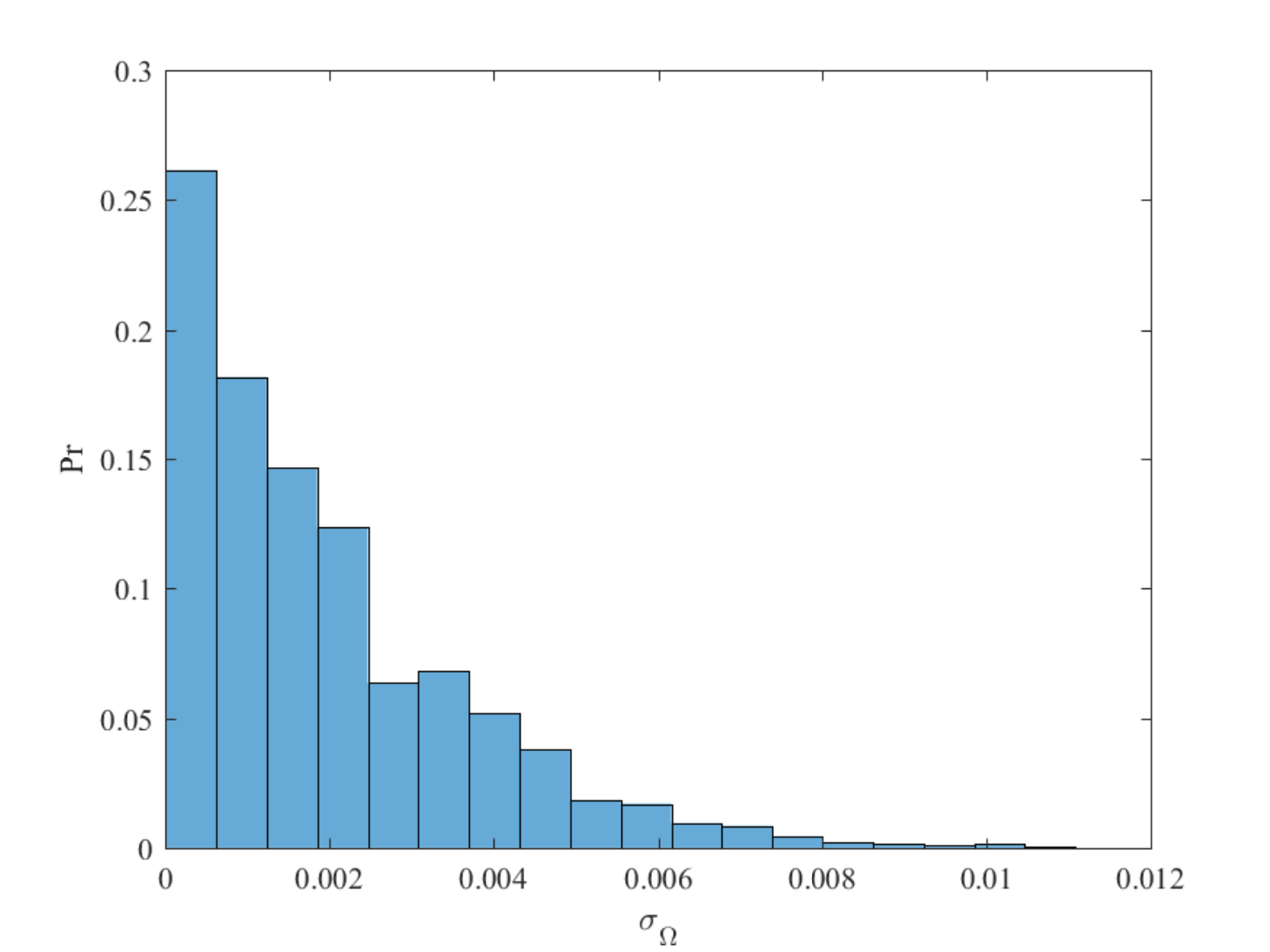}
\caption{Histogram of $\sigma_{\Omega}$, the minimum $\sigma_{N}$ among all sub-matrices of a sensing matrix $\bm \Phi$. The histogram is obtained using 2000 Monte-Carlo trials.}
\label{fig:hist_sigma_omega}
\end{figure}
\subsection{MIP}
Next, we consider the MIP of the sensing matrix. The parameters are set to $N = 64$ and $M = 16$. The frequency codes are uniformly distributed over the discrete set $\mathcal{D}_d$. The relative bandwidths are set to $B/f_c = \{0, 0.1, 0.5 \}$, where $B/f_c = 0$ means that the assumption $\zeta_n \approx 1$ holds. We also simulate the continuous case with $d_n \sim \mathcal{D}_c$ and $\zeta_n \approx 1$. Curves are obtained with $10^6$ Monte-Carlo trials. For each trial, we calculate the mutual coherence $\mu$ of the sensing matrix and depict the corresponding cumulative distribution function. The theoretical bound in (\ref{equ:pr_union}) is also displayed. The results are shown in Fig. \ref{fig:mip}. It can be seen that the theoretical upper bound (\ref{equ:pr_union}) is tight under the assumption that the relative bandwidth is negligible (thus  $\zeta_n \approx 1$ holds). When the relative bandwidth is large, the actual mutual coherence could exceed the predicted one, e.g., in the case that $B/f_c = 0.1$. However, the curve of $B/f_c = 0.5$ is under that of $B/f_c = 0.1$, which indicates that a larger relative bandwidth does not necessarily result in worse mutual incoherence.  
\begin{figure}[!ht]
\centering
\includegraphics[width=2.5in]{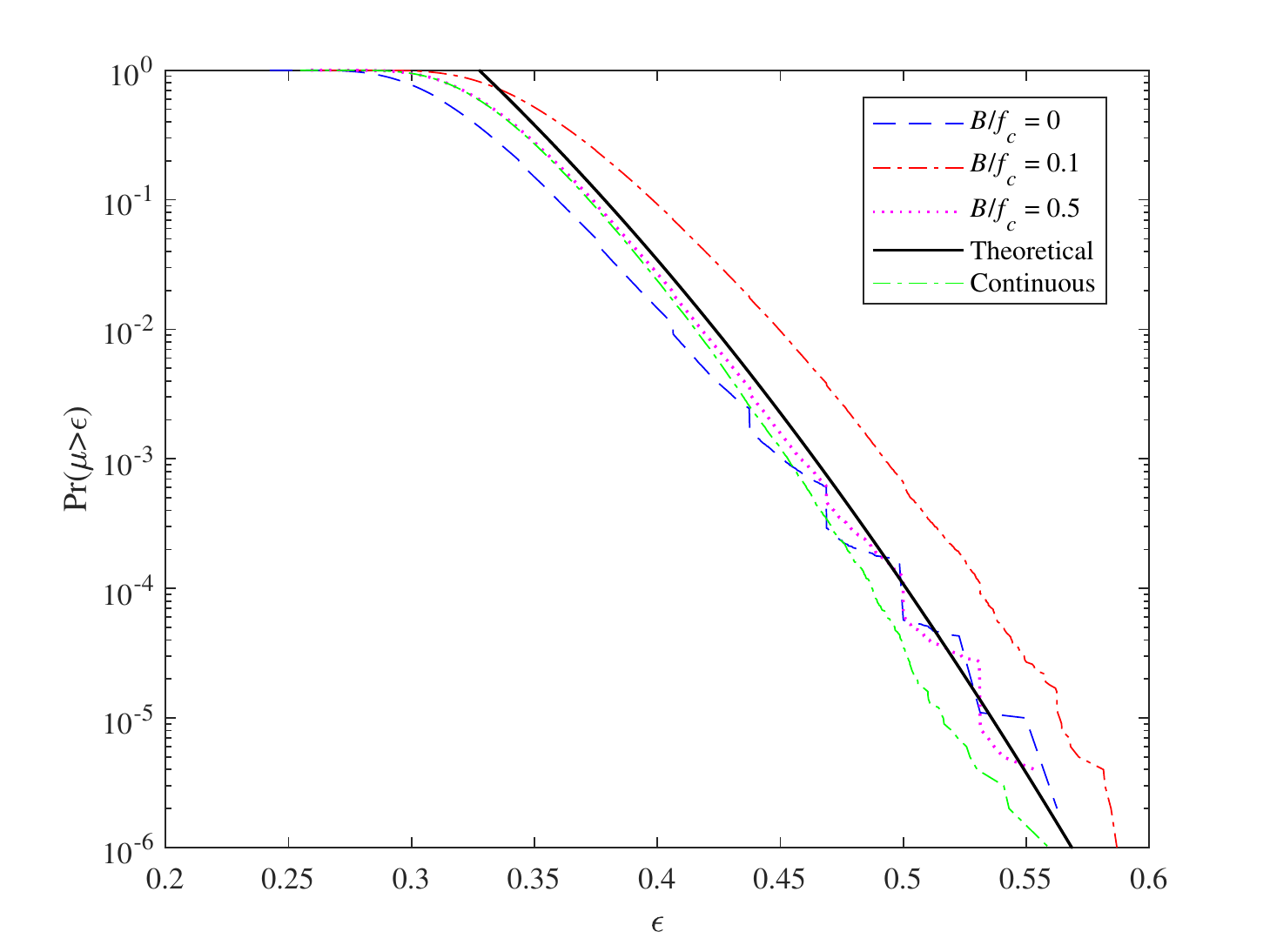}
\caption{The cumulative distribution functions of the mutual coherence $\mu$ obtained from $10^6$ Monte-Carlo trials.}
\label{fig:mip}
\end{figure}
\subsection{Recovery Performance in Noiseless Cases}
In Fig. \ref{fig:phasetransition}, we consider the recovery performance in noiseless cases. In particular, we plot the probability of exact recovery using the basis pursuit algorithm (\ref{equ:P1}) and matched filter (\ref{equ:mf}), where exact recovery means the support set of the unknown vector $\bm x$ is exactly estimated. In the simulations, the pulse number is $N = 64$, the number of frequencies is $M = 8$, and the amplitudes of scattering coefficients are all set to 1. The number of scatterers, $K$, is varied. The initial carrier frequency is $f_c = 10$ GHz and the bandwidth is $B = 64$ MHz. For each point on the curve, we perform 200 Monte-Carlo trials, where the frequency codes are randomly drawn obeying $U(\mathcal{D}_d)$, the support set of $\bm x$ is random, and the phases of non-zeros in $\bm x$ are i.i.d $U([0,2\pi])$. We solve (\ref{equ:P1}) using CVX \cite{cvx, gb08}. In both methods, we assume the number of scatterers, $K$, is known, and the support set is obtained as the indices of the largest $K$ magnitudes in $\bm x$.  The magnitudes are also compared with a threshold $\epsilon = 10^{-2}$. Those not exceeding the threshold are {removed from} the support set. From Fig. \ref{fig:phasetransition}, it is seen that CS dramatically outperforms the traditional matched filter. When $K>5$, the support set recovery probabilities using matched filter drops significantly, because some of the scatterers are masked by the sidelobes. When basis pursuit is applied, the region leading to exact support set recovery in FAR is fairly broad. However, the theoretical bound (\ref{equ:K_rangeMIP}) is 1.5 with $\delta = 0.1$, and is quite pessimistic.
\begin{figure}[!ht]
\centering
\includegraphics[width=2.5in]{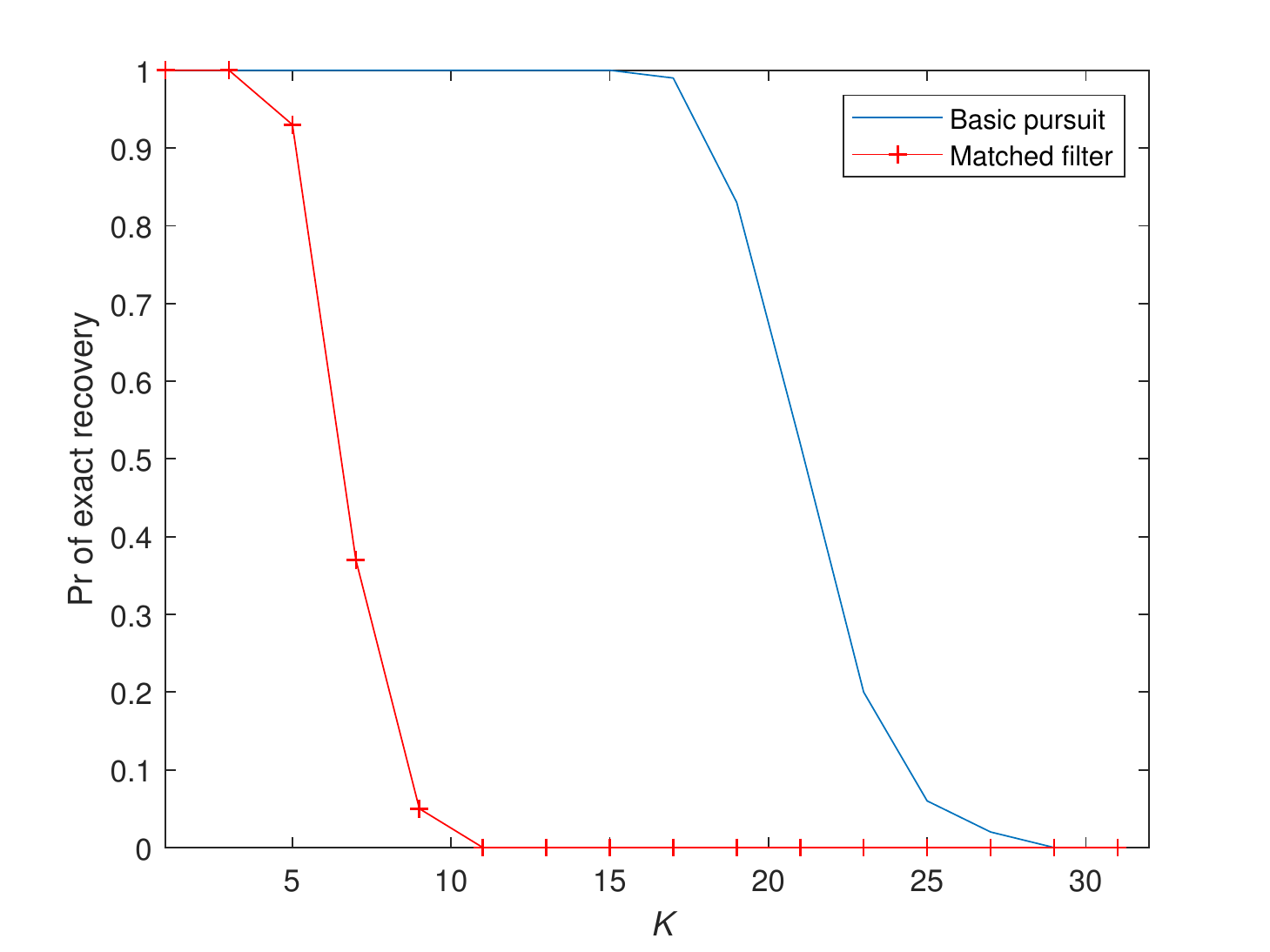}
\caption{Exact support set recovery probabilities using basis pursuit (\ref{equ:P1}) and matched filter (\ref{equ:mf}) in noiseless cases. }
\label{fig:phasetransition}
\end{figure}
\subsection{Recovery Performance in Noisy Cases}
We next consider noisy cases, and choose the probability of successful recovery to evaluate the performance of different CS algorithms. A successful recovery is defined as exact recovery of the support set.  In the simulations, $f_c = 10$ GHz, $B = 64$ MHz, $N = 64$, $M = 8$ and the number of scatterers $K =3$. The scatterers have identical amplitudes of 1 with random phases. 
The noise $\bm w$ in (\ref{equ:noisySigModel}) is assumed Gaussian white noise with a covariance matrix $\sigma^2 {\bm I}$, and $\sigma^2$ varies from -15 dB to 15 dB.  Subspace pursuit \cite{Dai2009} and Lasso are compared. In subspace pursuit, the number of scatterers $K$ is assumed known a priori. The Lasso algorithm solves
\begin{equation}
\min_{\bm x} {\frac{1}{2}\| {\bm y} - {\bm \Phi}{\bm x}\|_2^2} + \lambda \| \bm x\|_1
\end{equation}
with $\lambda = 3\sigma^2$, and is implemented with CVX. When the magnitude of the estimate is larger than $\epsilon = 0.2$, the corresponding index is put into the estimated support set. The results are shown in Fig. \ref{fig:noisy} with 200 Monte-Carlo trials. Both algorithms have high successful recovery probabilities for an FAR in high SNR, $\sigma^2 \leq 0$ dB. Subspace pursuit outperforms Lasso with the genie-aided information on the cardinal number of the support set. We also note that the selection of the parameters $\lambda$ and $\epsilon$ has a significant impact on the performance of Lasso. 
\begin{figure}[!ht]
\centering
\includegraphics[width=2.5in]{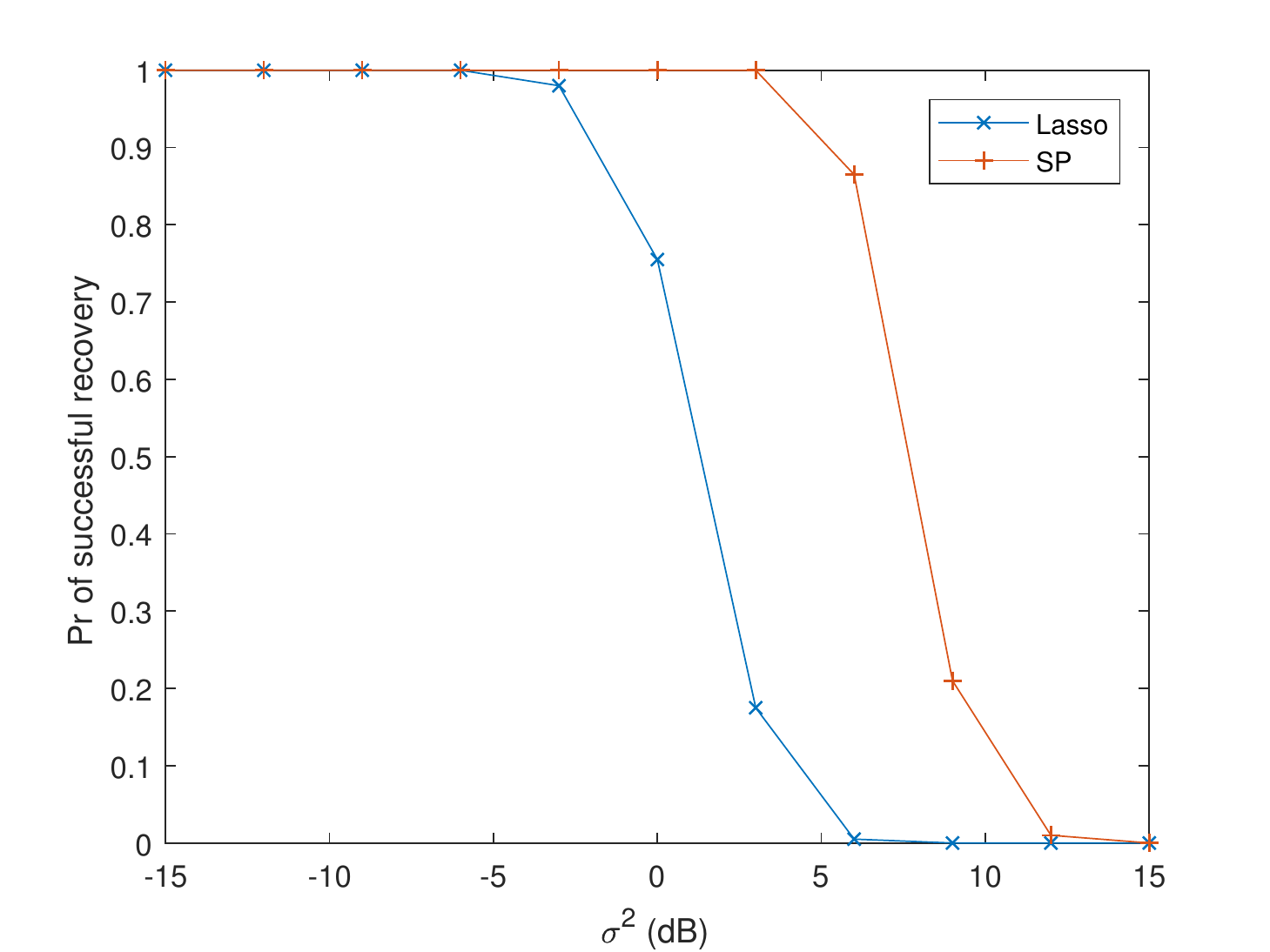}
\caption{Probabilities of successful recovery using subspace pursuit and Lasso algorithms. }
\label{fig:noisy}
\end{figure}
\subsection{Field Experiments}
Next, we show field experiments from a true FAR prototype.  We use separated antennas for transmitting and receiving, respectively, so that returns from short range objects {are} not eclipsed by the transmission.  The radar works at an initial frequency of $f_c=9$ GHz. Frequencies are varied pulse by pulse. There are $M^{\star} = 64$ frequencies with a minimum gap of $16$ MHz, which results in a synthetic bandwidth of $B = 1024$ MHz. {In each pulse, the carrier frequency $f_n = f_c + d_n B $ is randomly chosen. Specifically, $d_n \in \mathcal{D}_d = \left\{0, 1/M^{\star},\dots,  (M^{\star}-1)/M^{\star} \right\}$ and $d_n \sim U(\mathcal{D}_d )$.} The HRR is $c/2B \approx 0.15$ m. We set $T_r = 0.2$ ms and the equivalent pulse duration  {$T_p = 31.25$} ns. The CRR is $cT_p/2 \approx 4.7$ m, and the {number of HRR bins in a CRR bin is $M=\lceil T_p B \rceil =32 < M^{\star}$, which satisfies the condition to eliminate ghost images \cite{Liu2009, Liu2014a}. The} number of pulses is $N = 512$. {The moving target does not cross a CRR bin during the CPI, which requires $v < \frac{T_pc}{2N T_r}\approx 366.2$ m/s. For a slowly moving car, the velocity is lower than 10 m/s.}
\par
Field experiments are executed to evaluate performance. The target is a household car (see Fig. \ref{fig:field}) with two corner reflectors and four small metal spheres upon its roof to enhance the SNR. When the radar is operated, the car moves {in front of the radar} at a nearly constant speed along the road, surrounded by static objects including a big stone, roadside trees and iron barriers. {Returns from all scatterers located in $(0,cT_r/2]$, i.e. $(0,30]$ km, are collected. There are $\lfloor T_r/T_p \rfloor = 6400$ CRR bins.} We perform static clutter canceling \cite{Axelsson2006}  over all CRR bins. Then in these CRR bins, we find the CRR bin which has the maximum amplitude of radar echoes, and infer that the car is located in that bin. With data in that CRR bin, we perform range-Doppler processing of the target. We apply the matched filter (\ref{equ:mf}) and the OMP algorithm to jointly estimate the HRR profile and Doppler of the target. In OMP, the algorithm iterates 50 times, i.e., assuming $K=50$. The results are shown in Fig. \ref{fig:mf} and Fig. \ref{fig:omp}, respectively. {To better demonstrate the sidelobe, we project the three dimensional images in  Fig. \ref{fig:mf} and Fig. \ref{fig:omp} onto amplitude-range dimensions; see Fig. \ref{fig:hrrp}. In Fig. \ref{fig:hrrp}, only  the maximum $K$ amplitudes in $\hat{\bm x}$ using matched filter are found and shown. } In both methods, the velocity estimation is 6.9 m/s, and the span of the car is around 1.9 m. Comparing the recovery performance, we see that the matched filter suffers from sidelobe pedestal while OMP demonstrates a clearer reconstruction.
\begin{figure}[!ht]
\centering
\includegraphics[width=2.5in]{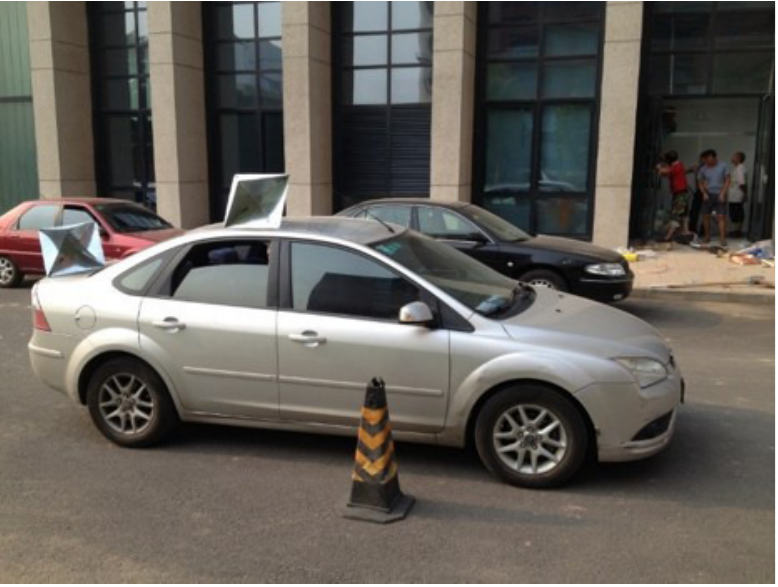}
\caption{Field experiment scenario. }
\label{fig:field}
\end{figure}
\begin{figure}[!ht]
\centering
\includegraphics[width=2.5in]{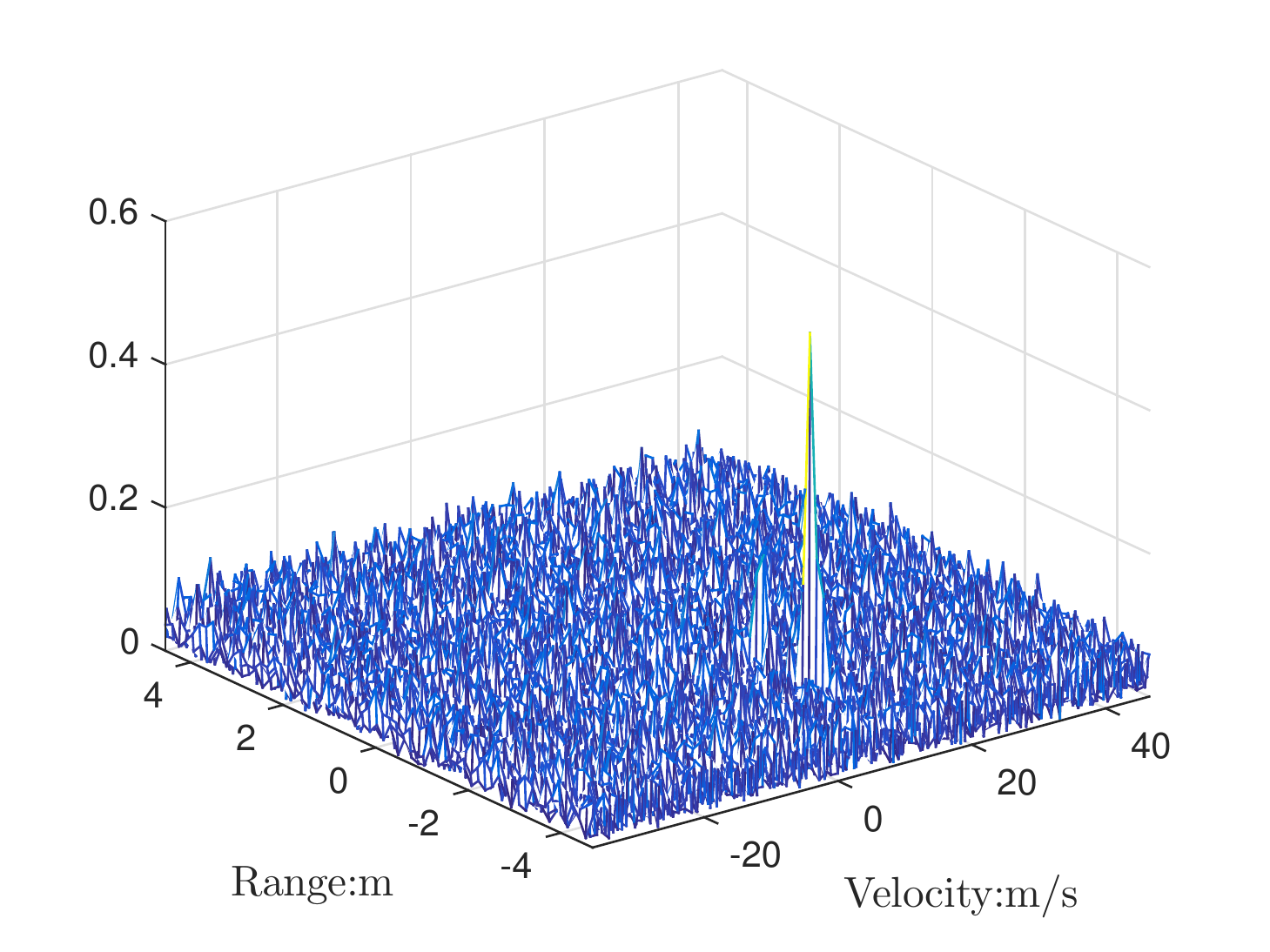}
\caption{Field experiment result using a matched filter.}
\label{fig:mf}
\end{figure}
\begin{figure}[!ht]
\centering
\includegraphics[width=2.5in]{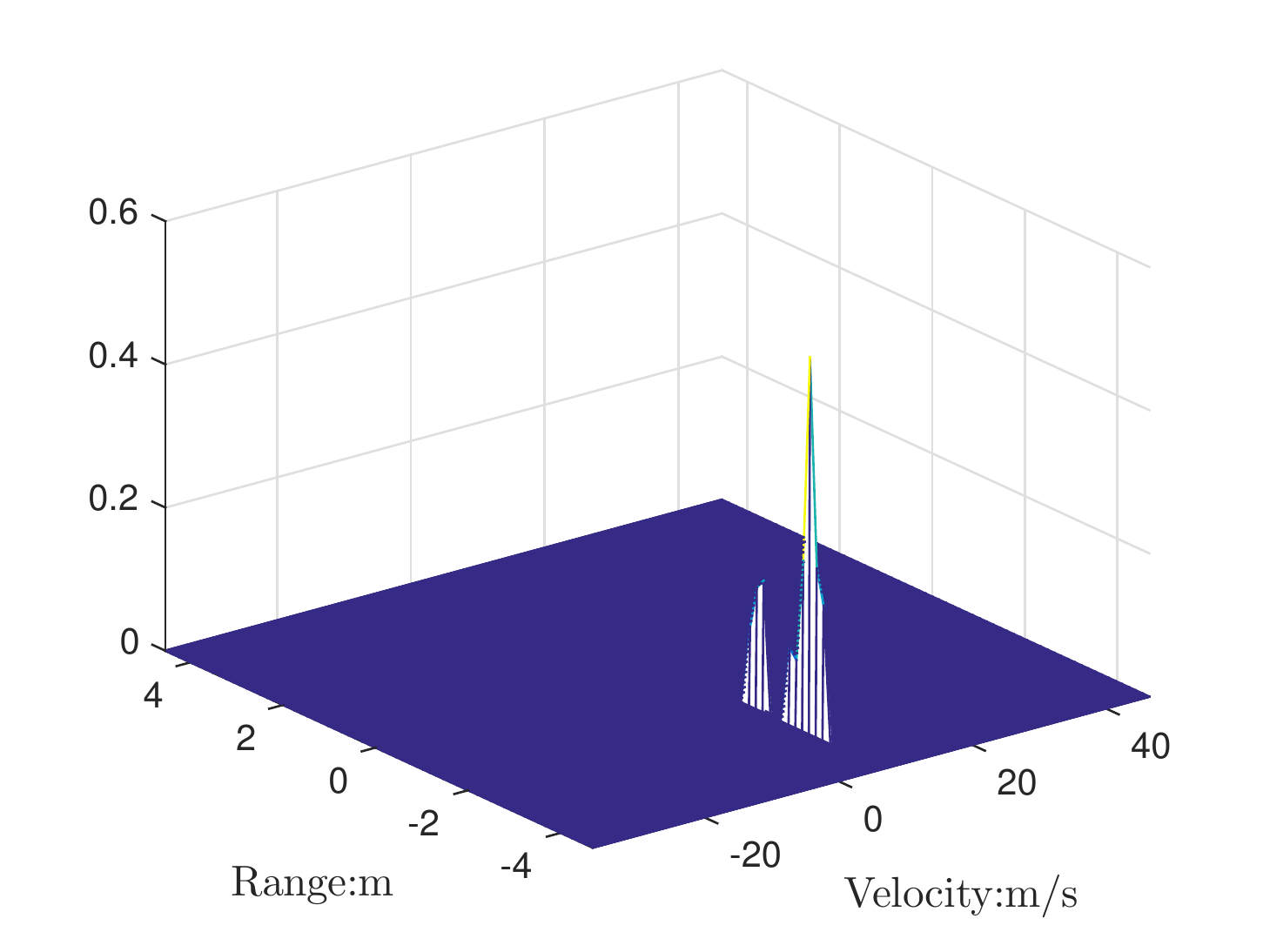}
\caption{Field experiment result using OMP. } 
\label{fig:omp}
\end{figure}

\begin{figure}[!ht]
\centering
\includegraphics[width=2.5in]{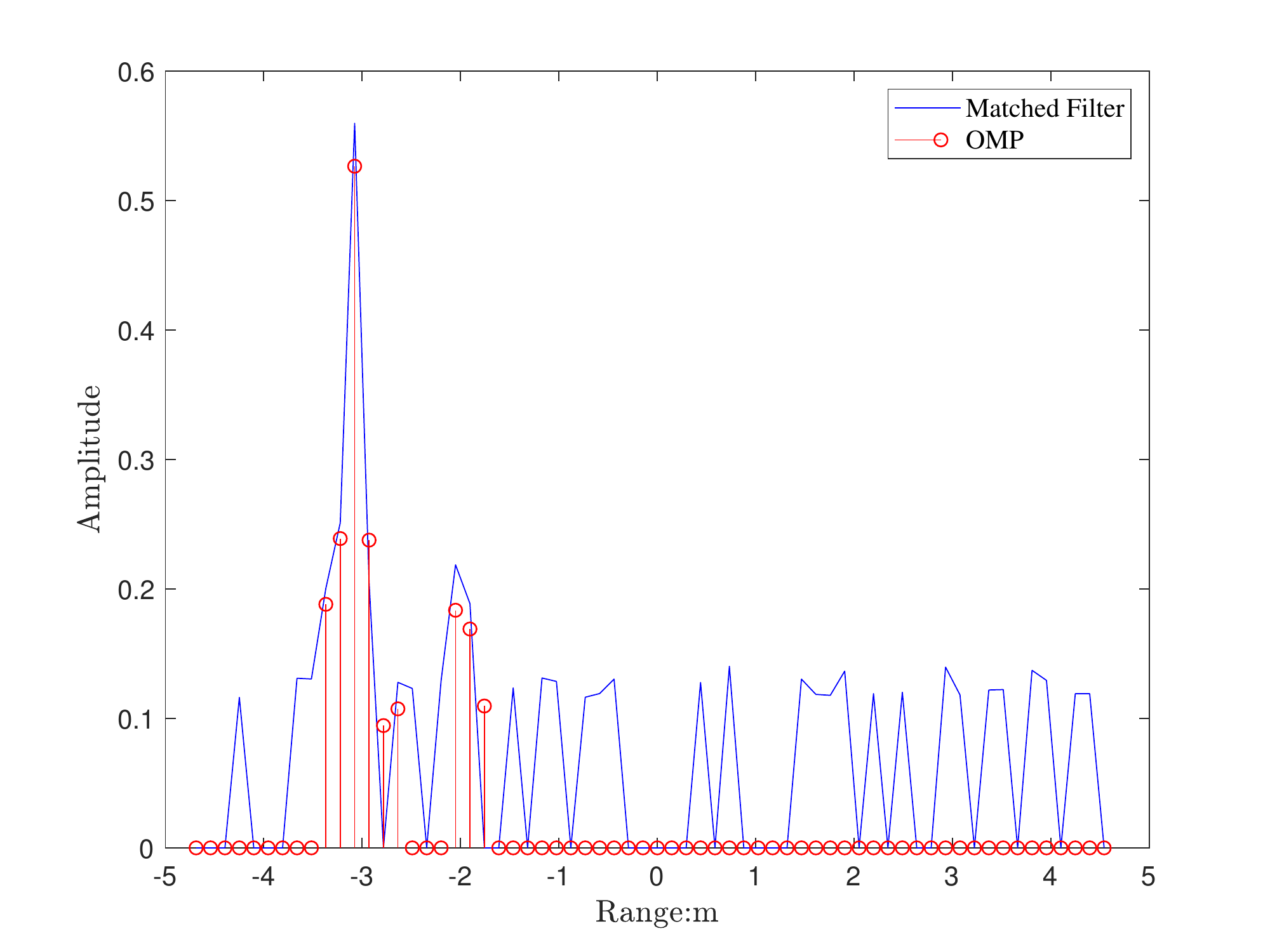}
\caption{Field experiment results projected onto amplitude-range dimensions. } 
\label{fig:hrrp}
\end{figure}

\section{Conclusion}\label{sec:con}
In this paper, sparse recovery for a frequency agile radar with random frequency codes is studied. We analyzed the spark and mutual incoherence properties of the radar sensing matrix, which guarantee reliable reconstruction of the targets. Using $\ell_0$ minimization, FAR exactly recovers $K=\frac{N}{2}$ scatterers in noiseless cases almost surely, where $N$ is the number of pulses. When we apply $\ell_1$ minimization or greedy CS methods and there is noise, the number of scatterers that is guaranteed to be reliably reconstructed by FAR is on the order of $K = O\left( \sqrt{\frac{N}{\log MN}}\right)$, where $M$ is the number of HRR bins in a CRR bin. Numerical simulations and field experiments were executed to validate the theoretical results and also demonstrate the practical recovery performance of FAR. 



%

\appendices
\section{Proof of Theorem \ref{thm:sparkFAR}}\label{app:spark}
To avoid confusion, in this appendix we will index the row of a matrix by $\xi$ and the column by $\eta$. 
We need to prove that any $N$ columns of $\bm \Phi$ are almost surely linearly independent. 
Following the form in (\ref{equ:phi}), we first fix some $N$ columns $\bm \Phi_{l_\eta+m_\eta N}$, with $\eta$ running through $\{0,1,\ldots, N-1\}$. 
These $N$ columns constitute a new $N\times N$ matrix $\bm A$ whose elements are $\left[{\bm A}\right]_{\xi,\eta}=\left[{\bm \Phi}\right]_{\xi,l_\eta+m_\eta N}$. We need to show that $\bm A$ is almost surely invertible. For brevity, we set
\begin{equation}
  z_\xi:=\exp(j2\pi d_\xi),\quad c_{\xi,\eta}:=\exp(j2\pi l_\eta\xi/N).
\end{equation}
\par
With the notation above, we may write $\left[{\bm A}\right]_{\xi,\eta}=c_{\xi,\eta}z_{\xi}^{m_\eta}$. 
The proof follows from the following three lemmas. 
The first one, being more abstract, is a strengthened version of the well-known fact that the set of zeros of a nonzero polynomial has measure zero.
\begin{lemma}\label{lem:polyrootMeas}
  Let $P\in\mathbb C[z_1,\cdots,z_n]$ be a nonzero complex polynomial in $n$-variables. Denote by $\mathcal N\subset\mathbb C^n$ the set of zeros of $P$. Consider the $n$-torus $\mathbb T^n=\underbrace{S^1\times\cdots\times S^1}_n$ with its obvious embedding $\mathbb T^n\subset\mathbb C^n$. Let $\sigma_n$ be the Haar measure on $\mathbb T^n$. We have
  \begin{equation}
    \sigma_n(\mathcal N\cap\mathbb T^n)=0.
  \end{equation}
\end{lemma}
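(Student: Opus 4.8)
The plan is induction on the number of variables $n$. For the base case $n=1$, a nonzero $P\in\mathbb C[z_1]$ has only finitely many roots, so $\mathcal N\cap\mathbb T^1=\mathcal N\cap S^1$ is a finite set and hence $\sigma_1(\mathcal N\cap S^1)=0$. Note also that since $P$ is continuous, $\mathcal N$ is closed in $\mathbb C^n$, so $\mathcal N\cap\mathbb T^n$ is closed in $\mathbb T^n$ and in particular Borel; the same remark applies to every slice appearing below, so all measures we write down are well defined.

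For the inductive step, regard $P$ as a polynomial in the single variable $z_n$ with coefficients in $\mathbb C[z_1,\dots,z_{n-1}]$, writing
\begin{equation}
  P(z_1,\dots,z_n)=\sum_{k=0}^{d}P_k(z_1,\dots,z_{n-1})\,z_n^k ,
\end{equation}
where $d$ is the degree of $P$ in $z_n$ and $P_d\not\equiv 0$. If $d=0$, then $P=P_0$ is a nonzero polynomial in $n-1$ variables and, under the natural identification $\mathbb T^n=\mathbb T^{n-1}\times S^1$ (so that $\sigma_n=\sigma_{n-1}\times\sigma_1$), one has $\mathcal N\cap\mathbb T^n=(\mathcal N_0\cap\mathbb T^{n-1})\times S^1$, where $\mathcal N_0$ is the zero set of $P_0$; the inductive hypothesis gives $\sigma_{n-1}(\mathcal N_0\cap\mathbb T^{n-1})=0$, hence $\sigma_n(\mathcal N\cap\mathbb T^n)=0$. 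If $d\ge 1$, apply the inductive hypothesis to $P_d$: the set $E:=\{w\in\mathbb T^{n-1}: P_d(w)=0\}$ satisfies $\sigma_{n-1}(E)=0$.

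Now for every $w=(z_1,\dots,z_{n-1})\in\mathbb T^{n-1}\setminus E$, the one-variable polynomial $z_n\mapsto P(w,z_n)$ has leading coefficient $P_d(w)\neq 0$, hence is a nonzero polynomial of degree $d$ with at most $d$ roots, so the slice $\mathcal N_w:=\{z_n\in S^1: P(w,z_n)=0\}$ has $\sigma_1(\mathcal N_w)=0$. By the Tonelli theorem applied to the nonnegative Borel function $w\mapsto\sigma_1(\mathcal N_w)$,
\begin{equation}
  \sigma_n(\mathcal N\cap\mathbb T^n)=\int_{\mathbb T^{n-1}}\sigma_1(\mathcal N_w)\,d\sigma_{n-1}(w)=\int_{E}\sigma_1(\mathcal N_w)\,d\sigma_{n-1}(w)=0 ,
\end{equation}
since the integrand vanishes off $E$ and $\sigma_{n-1}(E)=0$. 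This closes the induction.

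The argument involves no serious obstacle: the only points requiring care are bookkeeping ones, namely verifying that $\mathcal N\cap\mathbb T^n$ and all the slices $\mathcal N_w$ are Borel (which is immediate from continuity of $P$) so that the Fubini--Tonelli theorem is legitimately applicable, and correctly matching the Haar measure $\sigma_n$ with the product $\sigma_{n-1}\times\sigma_1$ under the identification $\mathbb T^n\cong\mathbb T^{n-1}\times S^1$. The entire mathematical content is the elementary one-variable fact that a nonzero polynomial has finitely many roots, propagated to higher dimension by Fubini.
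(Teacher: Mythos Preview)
Your proposal is correct and follows essentially the same route as the paper: induction on $n$, the base case being that a nonzero univariate polynomial has finitely many roots, and the inductive step handled by writing $P$ as a polynomial in one variable with coefficients in the remaining variables and applying Fubini/Tonelli to the product decomposition $\sigma_n=\sigma_{n-1}\times\sigma_1$. The only cosmetic difference is that you single out the leading coefficient $P_d$ to locate the exceptional set, whereas the paper phrases it as ``at least one coefficient is nonzero''; these are equivalent and your version is arguably cleaner.
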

\begin{proof}
  The left hand side of the above equality is well-defined, 
  since $\mathcal N$ is closed and, consequently, $\mathcal N\cap\mathbb T^n$ is a closed set in $\mathbb T^n$. 
  Note that $\sigma_n$ coincides with the product measure $\underbrace{\sigma_1\times\cdots\times\sigma_1}_n$. This enables us to prove the result by induction via Fubini's theorem. 
  \par
  For $n=1$ the set $\mathcal N$ is discrete, thus the proposition is trivial. 
  Suppose for $n=k$ the proposition is true. 
  Let $\pi:\mathbb C^{k+1}\to\mathbb C$ be the projection onto the first component. 
  We now make the natural identification $\mathbb C[z_1,\cdots,z_{k+1}]\approx(\mathbb C[z_2,\cdots,z_{k+1}])[z_1]$. 
  With the result for $n=k$ in hand, 
  we can find a $\sigma_k$-negligible set {(a set of measure zero)} $\mathcal O\subset\mathbb T^k$ such that 
  for any $(z_2,\cdots,z_{k+1})\in\mathbb T^k\setminus\mathcal O$, 
  the polynomial $P$ (viewed as a polynomial in $z_1$) has at least one nonzero coefficient, 
  i.e. is a nonzero polynomial in $z_1$. 
  Then by the result for $n=1$, for $\mathbf z^{(k)}\in\mathbb T^k\setminus\mathcal O$, the measure of the slice 
  \begin{equation}
    \sigma_1(\pi(\mathcal N\cap\mathbb T^{k+1}\cap\{(z_2,\cdots,z_{k+1})=\mathbf z^{(k)}\}))=0.
  \end{equation}
  We write this fact as $\sigma_1(\pi(\mathcal N|_{\mathbf z^{(k)}}))=0$. 
  Bearing in mind that $\sigma_{k+1}=\sigma_k\times\sigma_1$, we apply Fubini's theorem to obtain
  \begin{equation}
    \begin{split}
        \sigma_{k+1}&(\mathcal N\cap\mathbb T^{k+1})=\int_{\mathbb T^{k+1}}1_{\mathcal N}d\sigma_{k+1}\\
        &=\int_{\mathbb T^k\setminus\mathcal O}d\sigma_k\int_{S^1}1_\mathcal N\cdot\mu(dz_1)\\
        &+
        \int_\mathcal Od\sigma_k\int_{S^1}1_\mathcal N\cdot\sigma_1(dz_1)\\
        &=\int_{\mathbb T^k\setminus\mathcal O}\sigma_1(\pi(\mathcal N|_{\mathbf z^{(k)}}))\sigma_k(d\mathbf z^{(k)})
        +0\\
        &=0,
    \end{split}
  \end{equation}
which completes the proof.
\end{proof}
\begin{lemma}\label{lem:genericSpark}
  Let $\{d_\xi\}$ be independent random variables with continuous distributions for $\xi=0,\ldots,N-1$. 
  Fix some complex numbers $c_{\xi,\eta}$, positive real constants $A_\xi$ and nonnegative integers $m_\eta$, $\xi,\eta=0,\ldots, N-1$. Let $z_\xi=\exp(jA_\xi d_\xi)$. 
  Then the following two statements are equivalent:
  \begin{enumerate}[(i)]
    \item The $N\times N$ random matrix $\bm A$ with elements
    \begin{equation}
      \left[{\bm A}\right]_{\xi,\eta}=c_{\xi,\eta}z_\xi^{m_\eta}
    \end{equation}
    is almost surely invertible.
    \item There exists a vector $\mathbf w\in\mathbb C^N$ such that the $N\times N$ deterministic matrix ${\bm A}^{(\mathbf w)}$ with elements
    \begin{equation}\label{equ:auxInvMatrix}
      \left[{\bm A}^{(\mathbf w)}\right]_{\xi,\eta}=c_{\xi,\eta}w_\xi^{m_\eta}
    \end{equation}
    is invertible.
  \end{enumerate} 
\end{lemma}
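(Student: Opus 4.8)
The plan is to regard $\det\bm A$ as a polynomial in the complex numbers $z_0,\dots,z_{N-1}$ and to notice that $\bm A^{(\mathbf w)}$ is obtained by substituting $w_\xi$ for $z_\xi$ in that same polynomial. Concretely, by the Leibniz expansion
\begin{equation}
  \det\bm A=\sum_{\pi}\mathrm{sgn}(\pi)\prod_{\xi=0}^{N-1}c_{\xi,\pi(\xi)}z_\xi^{m_{\pi(\xi)}}=:P(z_0,\dots,z_{N-1}),
\end{equation}
where $\pi$ ranges over permutations of $\{0,\dots,N-1\}$, so that $P\in\mathbb C[z_0,\dots,z_{N-1}]$ has nonnegative integer exponents (the $m_\eta$). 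The identical computation with deterministic entries gives $\det\bm A^{(\mathbf w)}=P(w_0,\dots,w_{N-1})$. Hence statement (ii) is precisely the assertion that $P$ is not the zero polynomial, and statement (i) is the assertion that $P$ does not vanish at the random point $(z_0,\dots,z_{N-1})$ almost surely. Everything reduces to linking these two facts.

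One direction is then immediate. If (ii) fails, then $P(\mathbf w)=0$ for every $\mathbf w\in\mathbb C^N$, so $P\equiv0$ as a polynomial, and therefore $\det\bm A=P(z_0,\dots,z_{N-1})=0$ identically; thus $\bm A$ is never invertible and (i) fails. This proves (i)$\Rightarrow$(ii).

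For (ii)$\Rightarrow$(i), assume $P\not\equiv0$, and let $\mathcal N\subset\mathbb C^N$ be its zero locus. Since each $z_\xi=\exp(jA_\xi d_\xi)$ lies on the unit circle, the random point $(z_0,\dots,z_{N-1})$ lives on the torus $\mathbb T^N$, and Lemma \ref{lem:polyrootMeas} gives $\sigma_N(\mathcal N\cap\mathbb T^N)=0$. It remains to pass from Haar-null to probability-null. Because the $d_\xi$ are independent with continuous (atomless) distributions, the law of $z_\xi$ on $S^1$ is itself atomless (the preimage under $\exp(jA_\xi\,\cdot)$ of any point of $S^1$ is countable), and the law of $(z_0,\dots,z_{N-1})$ assigns measure zero to $\mathcal N\cap\mathbb T^N$: this follows by the same coordinate-by-coordinate induction used in the proof of Lemma \ref{lem:polyrootMeas}, replacing at each stage Haar measure on $S^1$ by the atomless law of the relevant $z_\xi$ and using that a nonzero univariate polynomial has only finitely many roots on $S^1$. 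Therefore $\mathbb P(\det\bm A=0)=\mathbb P\bigl((z_0,\dots,z_{N-1})\in\mathcal N\bigr)=0$, i.e. $\bm A$ is almost surely invertible, which is (i).

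The Leibniz bookkeeping and the trivial direction are routine; the only step that genuinely needs care is this last transfer, since Lemma \ref{lem:polyrootMeas} is phrased for Haar measure but is being applied to the push-forward law of $(z_0,\dots,z_{N-1})$. The continuity hypothesis on the $d_\xi$ is exactly what makes the transfer valid, and I would write out the induction on the number of variables explicitly there (rather than merely citing the Haar version). This is the main obstacle, and it is mild.
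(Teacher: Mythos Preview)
Your proposal is correct and tracks the paper's argument closely: both recognize $\det\bm A$ as a polynomial $P(z_0,\dots,z_{N-1})$, identify (ii) with $P\not\equiv 0$, and reduce (ii)$\Rightarrow$(i) to showing the zero locus is hit with probability zero. The only divergence is in the transfer step. The paper reads ``continuous distribution'' as absolutely continuous, invokes Lemma~\ref{lem:polyrootMeas} once to get a Haar-null set on $\mathbb T^N$, and then pulls this back through the covering map $(x_\xi)\mapsto(e^{j2\pi x_\xi})$ and the density of $(d_0,\dots,d_{N-1})$ on $\mathbb R^N$. You instead interpret ``continuous'' as atomless, note that the laws of the $z_\xi$ are then atomless on $S^1$, and propose to rerun the Fubini induction of Lemma~\ref{lem:polyrootMeas} with those atomless measures in place of Haar. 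Your route is slightly more general (no density needed) but requires reopening the proof of Lemma~\ref{lem:polyrootMeas}; the paper's route is a clean black-box citation but leans on absolute continuity. Either is fine here.
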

\begin{proof}
  (i)$\implies$(ii) is obvious. We prove that (ii)$\implies$(i). 
  Note that $\det\bm A$ is a polynomial $P$ in $N$ variables $z_0,\ldots,z_{N-1}$. 
  By (ii), this polynomial is nonzero, since $P(w_1,\ldots,w_{N-1})=\det{\bm A}^{(\mathbf w)}\ne0$. 
  Let $\mathcal N$ be the set of zeros of $P$. 
  Lemma \ref{lem:polyrootMeas} now implies that $\sigma_N(\mathcal N\cap\mathbb T^N)=0$. 
  On the other hand, the map $\phi:\mathbb R^N\to\mathbb T^N$ defined by 
  $\phi(x_0,\ldots,x_{N-1})=(e^{j2\pi x_0},\ldots,e^{j2\pi x_{N-1}})$ is obviously absolutely continuous. 
  By the assumption that $d_\xi$ are independent and absolutely continuous, 
  the map $(d_0,\ldots,d_{N-1}):\Omega\to\mathbb R^N$ is also absolutely continuous. 
  Thus the probability of the event $(d_1,\ldots, d_{N-1})^{-1}\phi^{-1}(\mathcal N\cap\mathbb T^N)$ is $0$, 
  as desired. 
\end{proof}
\begin{lemma}\label{lem:nondegenPoly}
  Let $l_\eta,m_\eta$ be nonnegative integers and $c_{\xi,\eta}=\exp(j2\pi l_\eta\xi/N)$. 
  Furthermore, assume that the map $\eta\mapsto(\exp(j2\pi l_\eta/N), m_\eta)$ is injective 
  when $\eta$ takes value in $\{0,\ldots,N-1\}$. 
  Then the statement (ii) in Lemma \ref{lem:genericSpark} is true.
\end{lemma}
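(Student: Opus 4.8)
The plan is to exhibit a single deterministic vector $\mathbf w$ that turns $\bm A^{(\mathbf w)}$ into a Vandermonde matrix with pairwise distinct nodes, after which invertibility is automatic. Since the entries $c_{\xi,\eta}=\exp(j2\pi l_\eta\xi/N)$ are prescribed, a geometric choice $w_\xi=\rho^{\xi}$ yields, for $\xi,\eta\in\{0,\ldots,N-1\}$,
\begin{equation}\label{equ:vandermonde_reduction}
\begin{split}
\left[\bm A^{(\mathbf w)}\right]_{\xi,\eta}
&=\exp\!\left(\frac{j2\pi l_\eta\xi}{N}\right)\rho^{\xi m_\eta}\\
&=\left(\exp\!\left(\frac{j2\pi l_\eta}{N}\right)\rho^{m_\eta}\right)^{\xi}=:y_\eta^{\,\xi}.
\end{split}
\end{equation}
Thus $\bm A^{(\mathbf w)}$ is precisely the $N\times N$ Vandermonde matrix built from the nodes $y_0,\ldots,y_{N-1}$, so $\det\bm A^{(\mathbf w)}=\prod_{0\le\eta<\eta'\le N-1}(y_{\eta'}-y_\eta)$, and it suffices to pick $\rho$ so that the $y_\eta$ are pairwise distinct.

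I would take any real $\rho>1$ (say $\rho=2$) and verify distinctness case by case using the injectivity hypothesis on $\eta\mapsto(\exp(j2\pi l_\eta/N),m_\eta)$. If $\eta\ne\eta'$ with $m_\eta=m_{\eta'}$, then $|y_\eta|=|y_{\eta'}|=\rho^{m_\eta}$, while the hypothesis forces $\exp(j2\pi l_\eta/N)\ne\exp(j2\pi l_{\eta'}/N)$; hence $y_\eta$ and $y_{\eta'}$ share a modulus but have distinct arguments, so $y_\eta\ne y_{\eta'}$. If instead $m_\eta\ne m_{\eta'}$, then $|y_\eta|=\rho^{m_\eta}\ne\rho^{m_{\eta'}}=|y_{\eta'}|$ because $\rho>1$ and the exponents are distinct nonnegative integers, so again $y_\eta\ne y_{\eta'}$. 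In all cases the nodes are distinct, the Vandermonde determinant is nonzero, and statement (ii) of Lemma \ref{lem:genericSpark} holds with $\mathbf w=(1,\rho,\ldots,\rho^{N-1})$.

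I do not foresee a genuine obstacle; the only thing requiring care is the bookkeeping of the two regimes and the observation that the injectivity hypothesis is tailored exactly so that the modulus $\rho^{m_\eta}$ separates columns with distinct $m_\eta$ while the phase $\exp(j2\pi l_\eta/N)$ separates those with equal $m_\eta$. (An equivalent alternative keeps all nodes on the unit circle: take $w_\xi=\exp(j2\pi\theta\xi)$ with $\theta$ irrational, in which case $y_\eta=y_{\eta'}$ with $m_\eta\ne m_{\eta'}$ would force $\theta(m_\eta-m_{\eta'})\in\mathbb Q$, hence $\theta\in\mathbb Q$, a contradiction.) Chaining this lemma with Lemmas \ref{lem:polyrootMeas} and \ref{lem:genericSpark}, together with the remark that any choice of $N$ distinct columns of $\bm\Phi$ in (\ref{equ:phi}) forces the labels $(l_\eta,m_\eta)$ — equivalently $(\exp(j2\pi l_\eta/N),m_\eta)$, since $l_\eta\in\{0,\ldots,N-1\}$ — to be distinct, then completes the proof of Theorem \ref{thm:sparkFAR}.
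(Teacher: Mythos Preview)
Your proof is correct and follows the same strategy as the paper: choose $w_\xi$ geometrically so that $\bm A^{(\mathbf w)}$ becomes a Vandermonde matrix, then verify the nodes $y_\eta=\exp(j2\pi l_\eta/N)\,w_1^{m_\eta}$ are pairwise distinct using the injectivity hypothesis. The paper picks $w_\xi=e^{jb\xi}$ with $b$ not a rational multiple of $\pi$ (your parenthetical alternative), whereas your primary choice $w_\xi=\rho^\xi$ with real $\rho>1$ separates nodes by modulus rather than by an irrationality argument; both work and the difference is cosmetic.
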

\begin{proof}
  Choose some real number $b$ which is not a rational multiple of $\pi$. 
  Take $w_\xi=\exp(jb\xi)$. Then ${\bm A}^{(\mathbf w)}$ becomes a Vandermonde matrix
  \begin{equation}
    \left[{\bm A}^{(\mathbf w)}\right]_{\xi,\eta}=e^{j\xi(2\pi l_\eta/N+bm_\eta)}.
  \end{equation}
  \par
  By the determinant of a Vandermonde matrix, it suffices to prove that 
  $\eta\mapsto\exp(j(2\pi l_\eta/N+bm_\eta))$ is injective. 
  Suppose to the contrary that for some $\eta\ne\eta'$ 
  we have $(2\pi l_\eta/N+bm_\eta)-(2\pi l_{\eta'}/N+bm_{\eta'})=2k\pi$, $k\in\mathbb Z$. 
  Since $b$ is not a rational multiple of $\pi$, this implies $m_\eta=m_{\eta'}$, 
  { henceforth} $2\pi l_\eta/N-2\pi l_{\eta'}/N=2k\pi$. 
  But then $(\exp(2\pi l_\eta/N),m_\eta)=(\exp(2\pi l_{\eta'}/N),m_{\eta'})$, 
  contradicting the injectivity of $\eta\mapsto(\exp(j2\pi l_\eta/N), m_\eta)$. 
  This proves that
  \begin{equation}
    \det{\bm A}^{(\mathbf w)}=\prod_{0\le\eta<\eta'<N}
    \left(e^{j\frac{2\pi }{N}l_{\eta'}+jbm_{\eta'}}-e^{j\frac{2\pi}{N} l_\eta+jbm_\eta}\right)
    \ne 0.
  \end{equation}
  In other words, our choice of $\mathbf w\in\mathbb C^N$ makes the matrix in (\ref{equ:auxInvMatrix}) invertible.
\end{proof}
\par
The map $\eta\mapsto(\exp(j2\pi l_\eta/N), m_\eta)$ is injective since $(l_\eta,m_\eta)$ is pairwise distinct and $0\le l_\eta<N$. Then one may apply Lemma \ref{lem:nondegenPoly}  and Lemma \ref{lem:genericSpark} successively and, by the consequence of Lemma 12, conclude the proof of Theorem \ref{thm:sparkFAR}. 
\par
We conclude with a remark on the robustness of our proof. The approximation we take here is $\zeta_\xi\approx 1$. Since $\zeta_\xi$ is absorbed in $c_{\xi,\eta}$, only the conclusion of Lemma \ref{lem:nondegenPoly} will be affected if $\zeta_\xi\ne1$. However, from the proof of Lemma \ref{lem:nondegenPoly}, for $\zeta_\xi$ sufficiently close to $1$, its conclusion remains true, thus Theorem \ref{thm:sparkFAR} still holds.
\section{Proof of Lemma \ref{lem:asymGaussian}}
\label{app:asymGaussian}
We divide the proof into two parts: in the first part, we use Lyapunov's condition to prove that the real and imaginary parts of $\chi_l$ have a joint Gaussian distribution asymptotically; in the second part, the expectation and variance are calculated. For conciseness, we omit the subscript $l$ in $\chi_l$. The parameters $p$ and $q$ belong to specific grids as stated in Subsection \ref{subsec:sm_matrix}, respectively. Note that $l \in \Xi$, which means $p \neq 0$. Also recall the assumption that random frequency codes $d_n \sim U\left( \mathcal{D}_d\right)$, and are independent from each other. 
\subsection{Asymptotic distribution}
{First, consider the case $p \neq \pi$.} Introduce a constant $\lambda$, and define a random variable
\begin{equation}
\begin{split}
X_n :&= {\rm Re}(e^{jpMd_n+jqn}) + \lambda {\rm Im}(e^{jpMd_n+jqn})\\
&=\cos(pMd_n+qn) + \lambda \sin(pMd_n+q{n}) \\
&=T_{\lambda}\cos(pMd_n+qn+z_{\lambda}) \\
&=T_{\lambda}\cos(pMd_n+\theta),
\end{split}
\end{equation}
where $T_{\lambda} := \sqrt{1+\lambda^2}$, $\sin z_{\lambda} = - \frac{\lambda}{T_{\lambda}}$, $\cos z_{\lambda} =  \frac{1}{T_{\lambda}}$ and $\theta:=qn+z_{\lambda}$. Define
\begin{equation}
Y_N := \sum \limits_{n=0}^{N-1} X_n.
\end{equation}
Note that $Y_N = \chi$ when $\lambda = j$. Thus, it is equivalent to prove that for any real value $\lambda$, as $N \rightarrow \infty$, it holds that
\begin{equation}
\label{equ:gaussian01}
\frac{Y_N-{\rm E}[{Y_N}]}{S_N} \sim \mathcal{N}(0,1),
\end{equation}
where $S_N$ is the standard variance, obeying
\begin{equation}
\begin{split}
S_N^2 &= {\rm E}\left[\left( \sum \limits_{n=0}^{N-1}\left( X_n -{\rm E}[X_n] \right) \right)^2\right]\\
&=\sum \limits_{n=0}^{N-1}{\rm E}\left[( X_n -{\rm E}[X_n])^2 \right],
\end{split}
\end{equation}
where independence between {the} $X_n$ is used. For $p \in (0,2\pi)$ and $M>1$, it holds that $S_N^2 > 0$.
\par
According to Lyapunov's central limit theorem \cite{Borovkov2013}, (\ref{equ:gaussian01}) holds, if for some $\delta >0$,
\begin{equation}\label{equ:Lyapunov}
\lim \limits_{N \rightarrow \infty}{ \frac{\sum \limits_{n = 0}^{N-1} {\rm E}\left[ \left|X_n-{\rm E}[X_n] \right|^{2+\delta} \right]}{S_N^{2+\delta}}} = 0.
\end{equation}
We consider $\delta = 1$. In the following, we calculate ${\rm E}\left[ \left|X_n-{\rm E}[X_n] \right|^3\right]$ and $S_N^{3}$ to verify that (\ref{equ:Lyapunov}) holds {assuming} $d_n \sim U\left( \mathcal{D}_d \right)$. 
\par
To calculate $S_N^{3}$,  derive 
\begin{equation}
\begin{split}
{\rm E} &[X_n] = \frac{T_{\lambda}}{M} \sum_{m = 0}^{M-1} \cos (pm+\theta)\\
&=\frac{T_{\lambda}}{2M\sin \frac{p}{2}}\left(
\sin \left(\frac{M-1}{2}p + \theta  \right) - \sin \left(-\frac{p}{2} + \theta \right) \right) \\
&= \frac{T_{\lambda}}{M\sin \frac{p}{2}} \sin \frac{Mp}{2}\cos\left(\frac{M-1}{2}p+\theta \right),
\end{split}
\end{equation}
where we assume $p \neq 0$. In addition,
\begin{equation}
\begin{split}
{\rm E} &[X_n^2] = \frac{T_{\lambda}^2}{M} \sum_{m = 0}^{M-1} \cos ^2 (pm+\theta)\\
&=\frac{T_{\lambda}^2}{2M} \sum_{m = 0}^{M-1} \left( \cos \left(
2pm + 2\theta \right) + 1\right) \\
&= \frac{T_{\lambda}^2}{2} + \frac{ T_{\lambda}^2\sin \left( (2M-1)p+2\theta\right)- \sin (2\theta-p)}{4M\sin p} \\
&= \frac{T_{\lambda}^2}{2} + \frac{T_{\lambda}^2\sin(Mp)\cos \left( (M-1)p +2\theta\right)}{2M\sin p}.
\end{split}
\end{equation}
Therefore,
\begin{equation}
\begin{split}
{\rm D} &[X_n] = {\rm E} [X_n^2] - ({\rm E} [X_n])^2 \\
&= \frac{T_{\lambda}^2}{2} + \frac{T_{\lambda}^2\sin (Mp) \cos ((M-1)p+2\theta)}{2M\sin p} \\
&- \frac{T_{\lambda}^2\sin^2\frac{Mp}{2} \cos^2 (\frac{M-1}{2}p+\theta)}{M^2 \sin^2 \frac{p}{2}}\\
&=\frac{T_{\lambda}^2}{2}  + \frac{T_{\lambda}^2\sin (Mp) \cos ((M-1)p+2\theta)}{2M\sin p} \\
&- \frac{T_{\lambda}^2\sin^2\frac{Mp}{2} \cos ((M-1)p+2\theta)}{2M^2 \sin^2 \frac{p}{2}}-\frac{T_{\lambda}^2\sin^2\frac{Mp}{2}}{2M^2 \sin^2 \frac{p}{2}}.
\end{split}
\end{equation}
Applying  $p \in \left\{\frac{2\pi}{M},\frac{2\pi \cdot 2}{M},\dots,\frac{2\pi(M-1)}{M}\right\}$, we have $\sin (Mp) = \sin^2\frac{Mp}{2}= 0$ and $\sin p \neq 0$, $ \sin \frac{p}{2} \neq 0$. Then 
\begin{equation}
S_N^2 = \sum \limits_{n=0}^{N-1} {\rm D}[X_n] = \frac{NT_{\lambda}^2}{2}.
\end{equation}
We {conclude} that $S_N^2 = O(N)$, and $S_N^3 = O(N^{\frac{3}{2}})$. 
\par
To calculate the  numerator  in (\ref{equ:Lyapunov}), note that
\begin{equation}
\left| X_n \right|<C_1,
\end{equation}
\begin{equation}
\left| {{\rm E}} \left[ X_n \right] \right|<C_2,
\end{equation}
where $C_1$ and $C_2$ are positive constants not related to $N$. Then,
\begin{equation}\label{equ:E3}
{\rm E}\left[ \left| X_n - {\rm E}[X_n]\right|^3\right] \leq {\rm E}\left[ \left( \left| X_n \right| + \left|{\rm E}[X_n] \right|\right)^3\right]
 \leq (C_1 + C_2)^3.
\end{equation}
Combing $S_N^3 = O(N^{\frac{3}{2}})$ and (\ref{equ:E3}), we have
\begin{equation}
\lim \limits_{N \rightarrow \infty} \frac{\sum \limits_{n = 0}^{N-1} {\rm E}\left[\left| X_n-{\rm E}[X_n]\right|^3 \right]}{S_N^3}
\leq  \lim \limits_{N \rightarrow \infty} \frac{N(C_1 + C_2)^3}{O(N^\frac{3}{2})} = 0.
\end{equation}
Thus, (\ref{equ:Lyapunov}) holds.
\par
{When $p = \pi$, $e^{jpMd_n+jqn} = e^{j\pi Md_n+jqn} = (-1)^{Md_n}e^{jqn}$. Define a random variable 
\begin{equation}
X_n^{'}:= T_{\lambda} (-1)^{Md_n}\cos \theta.
\end{equation}
Following similar steps as above, we find that (\ref{equ:Lyapunov}) still hold for $X_n^{'}$.
}
\par
 According to Lyapunov's central limit theorem, as $N \rightarrow \infty$, ${\rm Re}(\chi)$ and ${\rm Im}(\chi)$ have an asymptotic joint Gaussian distribution.
\subsection{Expectation and variance}
In this subsection, we calculate the expectations and variances of ${\rm Re}(\chi)$ and ${\rm Im}(\chi)$. Denote the variances of the real and imaginary parts and  the correlation coefficient as $\sigma_1^2$, $\sigma_2^2$, and $\sigma_{12}$, respectively, i.e.
\begin{equation}
\left[
\begin{array}{c}
{\rm Re}\left( \chi \right) \\
{\rm Im}\left( \chi \right) \\
\end{array}
 \right] \sim \mathcal{N}
\left( \left[
\begin{array}{c}
{\rm Re}\left( {\rm E} [ \chi] \right)\\
{\rm Im}\left({\rm E} [ \chi] \right)\\
\end{array}
\right],\left[
\begin{array}{cc}
\sigma_1^2 & \sigma_{12}\\
\sigma_{12} & \sigma_2^2\\
\end{array}
\right]
\right).
\end{equation}
\par
We start by analyzing the expectation of the complex valued $\chi$,
\begin{equation}
{\rm E}\left[\chi\right] = {\rm E}\left[ \frac{1}{N} \sum \limits_{n = 0}^{N-1} e^{jpMd_n + jqn}\right].
\end{equation}
Since Pr$(d_n = \frac{m}{M}) = \frac{1}{M}$, it holds that
\begin{equation}
{\rm E}\left[\chi\right] = \sum \limits_{m=0}^{M-1} \frac{1}{MN} \sum \limits_{n = 0}^{N-1} e^{jpm + jqn} .
\end{equation}
Exchanging the order of summations,
\begin{equation}
\begin{split}
{\rm E}\left[\chi\right] &= \sum \limits_{n=0}^{N-1} \frac{1}{MN}e^{jqn} \sum \limits_{m = 0}^{M-1} e^{jpm}\\
&= \frac{1}{MN} \frac{1-e^{jpM}}{1-e^{jp}}\sum \limits_{n=0}^{N-1}e^{jqn}\\
&=0,
\end{split}
\end{equation}
where the last equality holds because $p \in \left\{\frac{2\pi}{M},\frac{2\pi \cdot 2}{M},\dots,\frac{2\pi(M-1)}{M}\right\}$, which implies $e^{jpM} = 1$ while $e^{jp} \neq 1$ and hence $  \frac{1-e^{jpM}}{1-e^{jp}} = 0$. 
\par
Next, we calculate the variances $\sigma_1^2$, $\sigma_2^2$ and $\sigma_{12}$. According to \cite{Lo1964}, it holds that
\begin{equation}\label{equ:e_xx}
{\rm E}\left[ \chi^2 \right] = \sigma_1^2 - \sigma_2^2 + 2j \sigma_{12},
\end{equation}
\begin{equation}\label{equ:e_xxH}
{\rm E}\left[ \left|\chi\right|^2 \right] = \sigma_1^2 + \sigma_2^2.
\end{equation}
The left {hand} side of (\ref{equ:e_xx}) satisfies 
\begin{equation}
\begin{split}
{\rm E}\left[ \chi^2 \right] &= {\rm E}\left[ \frac{1}{N^2} \sum \limits_{n = 0}^{N-1}e^{jpMd_n + jqn} \sum \limits_{k = 0}^{N-1}e^{jp{M}d_k + jqk}\right]\\
&= \frac{1}{N^2} \sum \limits_{n =0}^{N-1}  \sum \limits_{k =0, k \neq n}^{N-1} e^{jq(n+k)} {\rm E} \left[e^{jpM(d_n+d_k)} \right]\\
&+ \frac{1}{N^2} \sum \limits_{n = 0}^{N-1} e^{j2qn} {\rm E} \left[e^{j2pMd_n} \right].
\end{split}
\end{equation}
Applying ${\mathbb{P}}\left(d_n = \frac{m}{M}\right) = \frac{1}{M}$ and independence between $d_n$, 
\begin{equation}\label{equ:Echi2}
\begin{split}
{\rm E}\left[ \chi^2 \right] &=\frac{1}{N^2} \sum \limits_{n =0}^{N-1}  \sum \limits_{k =0, k \neq n}^{N-1} e^{jq(n+k)} \sum_{m_1 =0}^{M-1} \sum_{m_2 =0}^{M-1} \frac{1}{M^2}e^{jp(m_1+m_2)}\\
&+ \frac{1}{N^2} \sum \limits_{n = 0}^{N-1} e^{j2qn} \sum \limits_{m=0}^{M-1}\frac{1}{M}e^{jp2m}\\
&=\frac{1}{N^2M^2}  \frac{(1-e^{jpM})^2}{(1-e^{jp})^2}\sum \limits_{n =0}^{N-1}  \sum \limits_{k =0, k \neq n}^{N-1} e^{jq(n+k)}\\
&+ \frac{1}{N^2M} \frac{1-e^{j2pM}}{1-e^{j2p}}\sum \limits_{n = 0}^{N-1} e^{j2qn} \\
&=\frac{1}{N^2M^2}\frac{(1-e^{jpM})^2}{(1-e^{jp})^2}\left( \frac{(1-e^{jNq})^2}{(1-e^{jq})^2} - \frac{1-e^{j2qN}}{1-e^{j2q}} \right) \\
&+ \frac{1}{N^2M} \frac{1-e^{j2pM}}{1-e^{j2p}}\frac{1-e^{j2qN}}{1-e^{j2q}} .
\end{split}
\end{equation}
According to the assumption  $p \in \left\{\frac{2\pi}{M},\frac{2\pi \cdot 2}{M},\dots,\frac{2\pi(M-1)}{M}\right\}$, we have $\frac{1-e^{jpM}}{1-e^{jp}} = 0$ and thus the first term in (\ref{equ:Echi2}) equals zero. Note that
\begin{equation}
\lim_{x\rightarrow \pi} \frac{1-e^{j2Mx}}{1-e^{j2x}} = M.
\end{equation}
We conclude that
\begin{equation}
{{\rm E}}\left[ \chi ^2 \right] =\begin{cases}
	\frac{1}{N},\ \text{if\ }p=q=\pi ,\\
	0,\ \text{otherwise}.\\
\end{cases}
\end{equation}
Similarly, as for the left side of (\ref{equ:e_xxH}), we have
\begin{equation}
\begin{split}
&{\rm E}\left[ |\chi|^2 \right] = {\rm E}\left[ \frac{1}{N^2} \sum \limits_{n = 0}^{N-1}e^{jpMd_n + jqn} \sum \limits_{k = 0}^{N-1}e^{-jpMd_k - jqk}\right]\\
&= \frac{1}{N^2}  \sum \limits_{n =0}^{N-1}  \sum \limits_{k =0, k \neq n}^{N-1} e^{jq(n-k)} {\rm E} \left[e^{jpM(d_n-d_k)} \right]
+ \frac{\sum \limits_{n = 0}^{N-1} {\rm E} \left[1 \right]}{N^2} \\
&=\frac{1}{N^2}  \sum \limits_{n =0}^{N-1}  \sum \limits_{k =0, k \neq n}^{N-1} e^{jq(n-k)} \sum_{m_1 =0}^{M-1} \sum_{m_2 =0}^{M-1} \frac{e^{jp(m_1-m_2)}}{M^2}
+ \frac{1}{N}\\
&=\frac{(1-e^{jpM})(1-e^{-jpM})}{N^2M^2(1-e^{jp})(1-e^{-jp})}  \sum \limits_{n =0}^{N-1}  \sum \limits_{k =0, k \neq n}^{N-1} e^{jq(n-k)}
+ \frac{1}{N} \\
&=\frac{1}{N^2M^2}\frac{\left|1-e^{jpM}\right|^2}{\left|1-e^{jp}\right|^2}\left( \frac{\left|1-e^{jNq}\right|^2}{\left|1-e^{jq}\right|^2} - N \right) 
+ \frac{1}{N}\\
&=\frac{1}{N}.
\end{split}
\end{equation}
Substituting ${{\rm E}}\left[ \chi^2 \right] =0$ and ${{\rm E}}\left[ |\chi|^2 \right] =\frac{1}{N}$ into (\ref{equ:e_xx}) and (\ref{equ:e_xxH}), respectively, one finds that $\sigma_1^2 = \sigma_2^2 = \frac{1}{2N}$ and $\sigma_{12}=0$. As for the case $p=q=\pi$, ${{\rm E}}\left[ \chi^2 \right] = \frac{1}{N}$ and ${{\rm E}}\left[ |\chi|^2 \right] =\frac{1}{N}$, it holds that $\sigma_1^2 ={\frac{1}{N}}$ and $\sigma_2^2 = \sigma_{12}=0$.

\section*{Acknowledgment}
The authors would like to thank Mr. Pan Li, Dr. Hailong Shi and Mr. Tong Zhao for providing insightful suggestions, and Mr. Lei Wang for collecting data in the field experiments and performing some simulations.

\ifCLASSOPTIONcaptionsoff
  \newpage
\fi



%
\bibliographystyle{IEEEtran}
\bibliography{IEEEabrv,IEEEexample}

\begin{thebibliography}{10}
\providecommand{\url}[1]{#1}
\csname url@samestyle\endcsname
\providecommand{\newblock}{\relax}
\providecommand{\bibinfo}[2]{#2}
\providecommand{\BIBentrySTDinterwordspacing}{\spaceskip=0pt\relax}
\providecommand{\BIBentryALTinterwordstretchfactor}{4}
\providecommand{\BIBentryALTinterwordspacing}{\spaceskip=\fontdimen2\font plus
\BIBentryALTinterwordstretchfactor\fontdimen3\font minus
  \fontdimen4\font\relax}
\providecommand{\BIBforeignlanguage}[2]{{%
\expandafter\ifx\csname l@#1\endcsname\relax
\typeout{** WARNING: IEEEtran.bst: No hyphenation pattern has been}%
\typeout{** loaded for the language `#1'. Using the pattern for}%
\typeout{** the default language instead.}%
\else
\language=\csname l@#1\endcsname
\fi
#2}}
\providecommand{\BIBdecl}{\relax}
\BIBdecl

\bibitem{Huang2015}
T.~Huang and Y.~Liu, ``Compressed sensing for a frequency agile radar with
  performance guarantees,'' in \emph{2015 IEEE China Summit and International
  Conference on Signal and Information Processing (ChinaSIP)}, July 2015, pp.
  1057--1061.

\bibitem{liu2000sp}
Z.~Liu and S.~Zhang, ``Velocity estimation for hopped-frequency radar,''
  \emph{Signal Processing (in Chinese)}, vol.~16, no.~2, pp. 97--100, 2000.

\bibitem{Axelsson2007}
S.~R.~J. Axelsson, ``Analysis of random step frequency radar and comparison
  with experiments,'' \emph{IEEE Transactions on Geoscience and Remote
  Sensing}, vol.~45, no.~4, pp. 890--904, 2007.

\bibitem{Liu2008}
Y.~Liu, H.~Meng, G.~Li, and X.~Wang, ``Range-velocity estimation of multiple
  targets in randomised stepped-frequency radar,'' \emph{Electronics Letters},
  vol.~44, no.~17, pp. 1032--1034, 2008.

\bibitem{Huang2012}
T.~Huang, Y.~Liu, G.~Li, and X.~Wang, ``Randomized stepped frequency {ISAR}
  imaging,'' in \emph{Radar Conference (RADAR), 2012 IEEE}, May 2012, pp.
  0553--0557.

\bibitem{Huang2014}
T.~Huang, Y.~Liu, H.~Meng, and X.~Wang, ``Cognitive random stepped frequency
  radar with sparse recovery,'' \emph{Aerospace and Electronic Systems, IEEE
  Transactions on}, vol.~50, no.~2, pp. 858--870, 2014.

\bibitem{Zhang2011}
L.~Zhang, Z.-J. Qiao, M.~Xing, Y.~Li, and Z.~Bao, ``High-resolution {ISAR}
  imaging with sparse stepped-frequency waveforms,'' \emph{Geoscience and
  Remote Sensing, IEEE Transactions on}, vol.~49, no.~11, pp. 4630--4651, 2011.

\bibitem{Yang2013}
J.~Yang, J.~Thompson, X.~Huang, T.~Jin, and Z.~Zhou, ``Random-frequency {SAR}
  imaging based on compressed sensing,'' \emph{Geoscience and Remote Sensing,
  IEEE Transactions on}, vol.~51, no.~2, pp. 983--994, 2013.

\bibitem{Liu2014}
Z.~Liu, X.~Wei, and X.~Li, ``Decoupled {ISAR} imaging using {RSFW} based on
  twice compressed sensing,'' \emph{IEEE Transactions on Aerospace and
  Electronic Systems}, vol.~50, no.~4, pp. 3195--3211, 2014.

\bibitem{Cohen2017a}
D.~Cohen, K.~V. Mishra, and Y.~C. Eldar, ``Spectrum sharing radar: Coexistence
  via {Xampling},'' \emph{IEEE Transactions on Aerospace and Electronic
  Systems}, vol.~54, no.~3, pp. 1279--1296, 2018.

\bibitem{Luminati2004}
J.~E. Luminati, T.~B. Hale, M.~A. Temple, M.~J. Havrilla, and M.~E. Oxley,
  ``Doppler aliasing artifact filtering in {SAR} imagery using randomised
  stepped-frequency waveforms,'' \emph{Electronics Letters}, vol.~40, no.~22,
  pp. 1447--1448, 2004.

\bibitem{Eldar2012}
Y.~C. Eldar and G.~Kutyniok, \emph{Compressed Sensing: Theory and
  Applications}.\hskip 1em plus 0.5em minus 0.4em\relax Cambridge University
  Press, 2012.

\bibitem{Eldar2015}
Y.~C. Eldar, \emph{Sampling Theory: Beyond Bandlimited Systems}.\hskip 1em plus
  0.5em minus 0.4em\relax Cambridge University Press, 2015.

\bibitem{Tropp2007}
J.~A. Tropp and A.~C. Gilbert, ``Signal recovery from random measurements via
  orthogonal matching pursuit,'' \emph{IEEE Transactions on Information
  Theory}, vol.~53, no.~12, pp. 4655--4666, 2007.

\bibitem{Huang2012a}
T.~Huang, Y.~Liu, H.~Meng, and X.~Wang, ``Adaptive matching pursuit with
  constrained total least squares,'' \emph{EURASIP Journal on Advances in
  Signal Processing}, vol. 2012, no.~1, p.~76, 2012.

\bibitem{Baraniuk2008a}
R.~Baraniuk, M.~Davenport, R.~DeVore, and M.~Wakin, ``A simple proof of the
  restricted isometry property for random matrices,'' \emph{Constructive
  Approximation}, vol.~28, no.~3, pp. 253--263, 2008.

\bibitem{Krahmer2014}
F.~Krahmer and H.~Rauhut, ``{Structured random measurements in signal
  processing},'' \emph{GAMM-Mitteilungen}, vol.~37, no.~2, pp. 217--238, 2014.

\bibitem{Liu2009}
Y.~Liu, H.~Meng, H.~Zhang, and X.~Wang, ``Eliminating ghost images in
  high-range resolution profiles for stepped-frequency train of linear
  frequency modulation pulses,'' \emph{IET Radar, Sonar Navigation}, vol.~3,
  no.~5, pp. 512--520, 2009.

\bibitem{Liu2014a}
Y.~Liu, T.~Huang, H.~Meng, and X.~Wang, ``Fundamental limits of {HRR} profiling
  and velocity compensation for stepped-frequency waveforms,'' \emph{IEEE
  Transactions on Signal Processing}, vol.~62, no.~17, pp. 4490--4504, 2014.

\bibitem{Bar-Ilan2014}
O.~Bar-Ilan and Y.~C. Eldar, ``{Sub-Nyquist} radar via {Doppler} focusing,''
  \emph{IEEE Transactions on Signal Processing}, vol.~62, no.~7, pp.
  1796--1811, 2014.

\bibitem{Xi2014}
F.~Xi, S.~Chen, and Z.~Liu, ``Quadrature compressive sampling for radar
  signals,'' \emph{IEEE Transactions on Signal Processing}, vol.~62, no.~11,
  pp. 2787--2802, 2014.

\bibitem{Hu2011}
Y.~Hu, Y.~Liu, H.~Meng, and X.~Wang, ``Extended range profiling in
  stepped-frequency radar with sparse recovery,'' in \emph{2011 IEEE RadarCon
  (RADAR)}, May 2011, pp. 1046--1049.

\bibitem{Cohen2017}
\BIBentryALTinterwordspacing
R.~Cohen and Y.~C. Eldar, ``Sparse {Doppler} sensing based on nested arrays,''
  Aug. 2018. [Online]. Available: \url{http://arxiv.org/abs/1710.00542}
\BIBentrySTDinterwordspacing

\bibitem{Fuchs2004}
J.~Fuchs, ``On sparse representations in arbitrary redundant bases,''
  \emph{IEEE Transactions on Information Theory}, vol.~50, no.~6, pp.
  1341--1344, 2004.

\bibitem{Rossi2014}
M.~Rossi, A.~M. Haimovich, and Y.~C. Eldar, ``Spatial compressive sensing for
  {MIMO} radar,'' \emph{IEEE Transactions on Signal Processing}, vol.~62,
  no.~2, pp. 419--430, 2014.

\bibitem{Dorsch2017}
D.~Dorsch and H.~Rauhut, ``Refined analysis of sparse {MIMO} radar,''
  \emph{Journal of Fourier Analysis and Applications}, vol.~23, no.~3, pp.
  485--529, 2017.

\bibitem{cvx}
M.~Grant and S.~Boyd, ``{CVX}: Matlab software for disciplined convex
  programming, version 2.1,'' \url{http://cvxr.com/cvx}, Mar. 2014.

\bibitem{gb08}
------, ``Graph implementations for nonsmooth convex programs,'' in
  \emph{Recent Advances in Learning and Control}, ser. Lecture Notes in Control
  and Information Sciences, V.~Blondel, S.~Boyd, and H.~Kimura, Eds.\hskip 1em
  plus 0.5em minus 0.4em\relax Springer-Verlag Limited, 2008, pp. 95--110,
  \url{http://stanford.edu/~boyd/graph_dcp.html}.

\bibitem{Dai2009}
W.~Dai and O.~Milenkovic, ``Subspace pursuit for compressive sensing signal
  reconstruction,'' \emph{IEEE Transactions on Information Theory}, vol.~55,
  no.~5, pp. 2230--2249, 2009.

\bibitem{Axelsson2006}
S.~R.~J. Axelsson, ``Suppression of noise floor and dominant reflectors in
  random noise radar,'' in \emph{2006 International Radar Symposium}, May 2006,
  pp. 1--4.

\bibitem{Borovkov2013}
A.~A. Borovkov, \emph{Probability Theory}, ser. Universitext.\hskip 1em plus
  0.5em minus 0.4em\relax London: Springer London, 2013.

\bibitem{Lo1964}
Y.~Lo, ``A mathematical theory of antenna arrays with randomly spaced
  elements,'' \emph{IEEE Transactions on Antennas and Propagation}, vol.~12,
  no.~3, pp. 257--268, 1964.

\end{thebibliography}

\end{document}